\newenvironment{proof sketch}{\noindent {\bf Proof sketch:} }{\hfill \qed}
\newtheorem{theorem}{Theorem}
\newtheorem{proposition}{Proposition}
\newtheorem{lemma}{Lemma}
\newtheorem{coro}[theorem]{Corollary}
\newtheorem{rem}{Remark}
\newtheorem{assumption}{Assumption}
\newcommand{\eps}{\varepsilon}
\newcommand{\sinr}{\text{\sc{sinr}}}
\newcommand{\snr}{\text{\sc{snr}}}
\newcommand{\opt}{\text{\textsc{opt}}}
\newcommand{\sector}{\text{\emph{sector}}}
\newcommand{\disperse}{\text{\emph{disperse}}}
\newcommand{\LL}{\mathcal{L}}
\newcommand{\FF}{\mathcal{F}}
\newcommand{\eqdf}{\triangleq}
\newcommand{\MAXMIN}{\text{\sc{Max-Min Throughput}}}
\newcommand{\MAX}{\text{\sc{Max Throughput}}}
\newcommand{\RR}{{\mathbb R}}
\newcommand{\XTH}{F}
\newcommand{\XNTH}{R}
\newcommand{\LPmax}{\textsc{MaxTh}_{LP}}
\newcommand{\LPmin}{\textsc{MaxMinTh}_{LP}}
\newcommand{\pmin}{P_{\min}}
\newcommand{\pmax}{P_{\max}}
\newcommand{\Cbad}{C^{\text{bad}}_u}
\newcommand{\Tzo}{\{0,\ldots, T-1\}}
\newcommand{\gcolor}{\text{\emph{greedy-coloring}}}
\newcommand{\At}{A^{\Delta}}
\newcommand{\sigdiv}{\sigma}
\newcommand{\barsignal}{$\text{signal}$}
\newcommand{\capslot}{\text{\sc cap-1slot}}
\newcommand{\lathop}{\text{\sc lat-1hop}}
\newcommand{\latpaths}{\text{\sc lat-paths}}
\newcommand{\latroute}{\text{\sc lat-route}}
\newcommand{\throute}{\text{\sc throughput-route}}
\begin{document}

\title{Multi-Hop Routing and Scheduling in Wireless Networks in the SINR Model}

\author{%
Guy Even
\thanks{School of Electrical Engineering,
Tel-Aviv Univ., Tel-Aviv 69978, Israel.
\protect\url{{eveng,yakovmat,medinamo}@post.tau.ac.il}}
\and
Yakov Matsri~$^*$
\and
Moti Medina~$^*$
\date{}
}
\maketitle \date{}


\begin{abstract}
  We present an algorithm for multi-hop routing and scheduling of
  requests in wireless networks in the \sinr\ model. The goal of our
  algorithm is to maximize the throughput or maximize the minimum
  ratio between the flow and the demand.

  Our algorithm partitions the links into buckets.  Every bucket
  consists of a set of links that have nearly equivalent reception
  powers.  We denote the number of nonempty buckets by $\sigdiv$.  Our
  algorithm obtains an approximation ratio of $O(\sigdiv \cdot \log
  n)$, where $n$ denotes the number of nodes.  For the case of linear
  powers $\sigdiv =1$, hence the approximation ratio of the algorithm
  is $O(\log n)$.  This is the first practical approximation algorithm
  for linear powers with an approximation ratio that depends only on
  $n$ (and not on the max-to-min distance ratio).

  If the transmission power of each link is part of the input (and
  arbitrary), then $\sigdiv = O(\log\Gamma + \log \Delta)$, where
  $\Gamma$ denotes the ratio of the max-to-min power, and $\Delta $
  denotes the ratio of the max-to-min distance.  Hence, the
  approximation ratio is $O(\log n \cdot (\log\Gamma + \log \Delta))$.

  Finally, we consider the case that the algorithm needs to assign
  powers to each link in a range $[\pmin,\pmax]$. An extension of the
  algorithm to this case achieves an approximation ratio of $O[(\log n
  + \log \log \Gamma) \cdot (\log\Gamma + \log \Delta)]$.

\end{abstract}

\section{Introduction}
In this paper we deal with the problem of maximizing throughput in a
wireless network.  Throughput is a major performance criterion in many
applications, including: file transfer and video streaming.  It has
been acknowledged that efficient utilization of network resources
requires so called cross layered algorithms~\cite{lin2006tutorial}.
This means that the algorithm deals with tasks that customarily belong
to different layers of the network. These tasks include: routing,
scheduling, management of queues in the nodes, congestion control, and flow control.

The problem we consider is formulated as follows.  We are given a set
$V$ of $n$ nodes in the plane.  A link $e$ is a pair $(s_e,r_e)$ of
nodes with a power assignment $P_e$. The node $s_e$ is the transmitter
and the node $r_e$ is the receiver. In the \sinr\ model, $r_e$
receives a signal from $s_e$ with power $S_e=P_e / d_e^\alpha$, where
$d_e$ is the distance between $s_e$, and $r_e$ and $\alpha$ is the
path loss exponent.  The network is given a set of requests
$\{R_i\}_{i=1}^k$.  Each request is a $3$-tuple $R_i=({\hat s}_i,{\hat
  t}_i,b_i)$, where ${\hat s}_i \in V$ is the source, ${\hat t}_i \in
V$ is the destination, and $b_i$ is the requested packet rate.  The
output is a multi-commodity flow $f=(f_1,\ldots,f_k)$ and an
\sinr-schedule $S=\{L_t\}_{t=0}^{T-1}$ that supports $f$.  Each $L_t$
is a subset of links that can transmit simultaneously
(\sinr-feasible).  The goal is to maximize the total flow
$|f|=\sum_{i=1}^k |f_i|$. We also consider a version that maximizes
$\min_{i=1\ldots k} |f_i|/b_i$.  Let $\Delta\eqdf {d_{\max}} / d_{\min}$ is
the ratio between the maximum and minimum length of a link, and
$\Gamma\eqdf \pmax/\pmin$ the ratio between the maximum and minimum
transmission power.  For the case in which $\max _{e \neq e'}\frac
{S_e}{S_{e'}}=O(1)$, the approximation ratio achieved by the algorithm
is $O(\log n)$.  For arbitrary powers and link lengths, the approximation ratio
achieved by the algorithm is $O(\log n \cdot (\log\Gamma + \log
\Delta))$.

\paragraph{Previous Work.}
Gupta and Kumar~\cite{gupta2000capacity} studied the capacity of
wireless networks in the \sinr-model and the graph model for random
instances in a square.  The \sinr-model for wireless networks was
popularized in the algorithmic community by Moscibroda and Wattenhofer
~\cite{moscibroda2006complexity}. NP-Completeness for scheduling a
set of links was proven by
Goussevskaia~\cite{goussevskaia2007complexity}.

Algorithms for routing and scheduling in the \sinr-model can be
categorized by four main criteria: maximum capacity with one round vs.
scheduling, multi-hop vs. single-hop, throughput maximization vs.
latency minimization, and the choice of transmitter powers.  In the
single-hop setting, routing is not an issue, and the focus is on
scheduling. If the objective is latency minimization, then each
request is treated as a task, and the goal is to minimize the
makespan.

The following problems are considered.
\begin{inparaenum}[(1)]
\item \label{item:prob1}
\capslot : find a subset of maximum cardinality that is SINR-feasible.
\item \label{item:prob2}
\lathop : find a shortest SINR-schedule for a set of links.
\item \label{item:prob3}
\latpaths : find a shortest SINR-schedule for a set of paths.
\item \label{item:prob4}
\latroute : find a routing and a shortest SINR-schedule for a set of multi-hop requests.
\item \label{item:prob5} \throute : find a routing and maximum
  throughput SINR-schedule for a set of multi-hop requests.
\end{inparaenum}
We briefly review some of the algorithmic results in this area
published in the last three years.

Chafekar et al.~\cite{chafekar2007cross} present an approximation
algorithm for \latroute. The
approximation ratio is $O(\log n \cdot \log \Delta \cdot \log^2
\Gamma)$. Fangh{\"a}nel
et al.~\cite{fanghanel2010improved} improved this result to $O(\log
\Delta \cdot \log^2 n)$.  Goussevskaia et
al.~\cite{goussevskaia2009capacity} pointed out that $\log \Delta$ can
be $\Omega(n)$, and presented the first approximation algorithm whose
approximation ratio is always nontrivial. In fact, the approximation
ratio obtained by Goussevskaia et al.~\cite{goussevskaia2009capacity}
is $O(\log n)$ for the case \lathop\ with
uniform power transmissions.

Halldorsson~\cite{halldorsson2009wireless} presented algorithms for
\lathop\ with mean power assignments. He presented an $O(\log n \log \log
\Delta)$-approximation and an $O(\log \Delta)$-online algorithm that
uses mean power assignments with
respect to OPT that can choose arbitrary power assignments (see
also~\cite{tonoyan2010algorithms}).

Halldorsson and Mitra~\cite{halldorssonwireless} presented a constant
approximation algorithm for \capslot\ problem with uniform,
linear and mean power assignments. In addition, by using the mean power assignment,
the algorithm obtains a $O(\log n +\log \log \Delta)$-approximation
with respect to arbitrary power assignments.

Kesselheim and V\"ocking ~\cite{kesselheim2010distributed} give a
distributed randomized algorithm for \lathop\ that obtains an $O(\log^2 n)$-approximation
using uniform and linear powers. Halldorson and Mitra
~\cite{halldorsson2011nearly} improve the analysis to achieve an $O(\log
n)$-approximation.

Kesselheim~\cite{K10} presents approximation results in the
\sinr-model: an $O(1)$-approximation for \capslot, an $O(\log
n)$-approximation for \lathop, an $O(\log^2 n)$-approximation
for \latpaths\ and \latroute.  In ~\cite{K10} there is no limitation
on power assignment imposed neither on the solution nor on the optimal
solution.  In practice, power assignments are limited, especially for
mobile users with limited power supply.

The most relevant work to our result is by Chafekar et
al.~\cite{ChafekarCapacity} who presented approximation algorithms for
\throute. They present the following results, an $O(\log
\Delta)$-approximation for uniform power assignment and linear power
assignment, and an $O(\log \Gamma \cdot \log \Delta)$ for arbitrary
power assignments.

For linear powers, Wan et al.~\cite{wanwireless} obtain a $O(\log
n)$-approximation for \throute.  The algorithm is based on a reduction
to the single-slot problem using the ellipsoid method.
In~\cite{wan2009multiflows}, Wan writes that ``this algorithm is of
theoretical interest only, but practically quite infeasible.''
For the case that the algorithm assigns powers from a limited range,
Wan et al.~\cite{wanwireless} achieve an $O(\log n \cdot \log
\Gamma)$-approximation ratio.

\paragraph{Our result.}
 We present an algorithm for
\throute.  Our algorithm partitions the links into buckets.  Every
bucket consists of a set of links that have nearly equivalent
reception powers.  We denote the number of nonempty buckets (also
called the signal diversity of the links) by $\sigdiv$.  Our algorithm
obtains an approximation ratio of $O(\sigdiv \cdot \log n)$, where $n$
denotes the number of nodes.

For the case of linear power assignment the signal diversity is
$\sigdiv =1$, hence the approximation ratio of the algorithm is
$O(\log n)$.  This is the first practical approximation
algorithm for linear powers that obtains an approximation ratio that
depends only on $n$ (and not on ratio of the max-to-min distance).
This improves the $O(\log \Delta)$-approximation of Chafekar et
al.~\cite{ChafekarCapacity} for linear power assignment.  As pointed
out in~\cite{goussevskaia2009capacity}, $\log \Delta$ can be
$\Omega(n)$.  The linear power assignment model makes a lot of sense
since it implies that, in absence of interferences, transmission
powers are adjusted so that the reception powers are uniform.

In the case of arbitrary given powers, the signal diversity is
$\sigdiv = O(\log\Gamma + \log \Delta)$.  Hence, the approximation
ratio is $O(\log n \cdot (\log\Gamma + \log \Delta))$.  For arbitrary
power assignments Chafekar et al.~\cite{ChafekarCapacity} presented
approximation algorithm that achieves approximation ratio of $O(\log
\Gamma \cdot \log \Delta)$.  In this case, the approximation ratio of
our algorithm is not comparable with the algorithm presented by
Chafekar et al.~\cite{ChafekarCapacity} (i.e., in some cases it is
smaller, in other cases it is larger).

For the case of limited powers where the algorithm needs to assign
powers between $\pmin$ and $\pmax$, we give a $O[(\log n + \log \log
\Gamma) \cdot (\log\Gamma + \log \Delta)]$-approximation algorithm.

Our results apply both for maximizing the total throughput and for
maximizing the minimum fraction of supplied demand. Other fairness
criteria apply as well (see also \cite{ChafekarPhD}).

\paragraph{Techniques.}
Similarly to~\cite{ChafekarCapacity} our algorithm is based on linear
programming relaxation and greedy coloring.
The linear programming relaxation determines the routing and the flow along each route.
Greedy coloring induces a schedule in which, in every slot, every link is \sinr-feasible with respect to longer links in the same slot.

We propose a new method of classifying the links.
In~\cite{ChafekarCapacity,halldorsson2009wireless} the links are classified by lengths and by
transmitted powers.
On the other hand, we classify the links by their \emph{received power}.

We present a new linear programming formulation for throughput
maximization in the \sinr-model.  This
formulation uses novel symmetric interference constraints, for every link $e$, that
bound the interference incurred by other links in the same bucket as well as the
interference that $e$ incurs to other links.
We show that this formulation is a relaxation due to our link classification method.

We then apply a greedy coloring procedure for rounding the LP
solution. This method
follows~\cite{alicherry2005joint,ChafekarCapacity,wan2009multiflows}
and others (the greedy coloring is described in Section~\ref{sec:coloring}).

The schedule induced by the greedy coloring is not \sinr-feasible.
Hence, we propose a refinement technique that produces an \sinr-feasible schedule.
We refine each color class using a bin packing procedure that is based
on the symmetry of the interference coefficients in the LP. We believe
this method is of independent interest since it mitigates the problem
of bounding the interference created by
shorter links.

\paragraph{Organization.}
In Sec.~\ref{sec:prelim} we present the definitions and notation.  The
throughput maximization problem is defined in Sec.~\ref{sec:problem}.
In Sec.~\ref{sec:necessary}, we present necessary conditions for
\sinr-feasibility for links that are in the same bucket.  The results
in Sec.~\ref{sec:necessary} are used for proving that the linear
programming formulation presented in Sec.~\ref{sec:LP} is indeed a
relaxation of the throughput maximization problem.  The algorithm for
linear powers is presented in Sec.~\ref{sec:alg} and analyzed in
Sec.~\ref{sec:analysis}. In Sec.~\ref{sec:arbitraty} we extend the algorithm so that it handles arbitrary powers.
In Sec.~\ref{sec:limited} we extend the algorithm so that it assigns limited powers.

\section{Preliminaries}
\label{sec:prelim}
We briefly review definitions used in the literature for algorithms in
the \sinr\ model (see~\cite{HW,ChafekarCapacity}).

We consider a wireless network that consists of a set $V$ of $n$ nodes in the plane.
Each node is equipped with a transmitter and a receiver.
We denote the distance between nodes $u$ and $v$ by $d_{uv}$.

A \emph{link} is a $3$-tuple $e=(s_e,r_e,P_e)$, where $s_e\in V$ is the
transmitter, $r_e\in V$ is the receiver, and $P_e$ is the transmission power.
In the general setting we allow parallel links with different powers.
The set of links is denoted by $\LL$ and $m \eqdf |\LL|$.
We abbreviate and denote the distance $d_{s_er_e}$ by $d_e$.
Similarly, we denote the distance $d_{s_er_e'}$ by $d_{ee'}$.
Note that according to this notation, $d_{ee'}\neq d_{e'e}$.

We use the following radio propagation model.  A transmission from
point $s$ with power $P$ is received at point $r$ with power
$P/d_{sr}^{\alpha}$.  The exponent $\alpha$ is called the \emph{path
  loss exponent} and is a constant.
  In most practical
situations, $2 \leq \alpha \leq 6$; our algorithm works for any constant $\alpha \geq 0$.
For links $e,e'$, we use the
following notation: $S_e\eqdf P_e/d_e^\alpha$ and
$S_{e'e} \eqdf P_{e'}/{d_{e'e}^{\alpha}}$.

A subset of links $L\subseteq \LL$ is \sinr-feasible if
$S_e/(N+\sum_{e'\in L \setminus\{e\}} S_{e'e}) \geq \beta$, for every $e\in L$.
This ratio is called the \emph{signal-to-noise-interference ratio}
(\sinr), where the constant $N$ is positive and models the noise in
the system.  The threshold $\beta$ is a positive constant. The ratio
$S_e/N$ is called the \emph{signal-to-noise ratio} (\snr).

A link $e$ can tolerate an accumulated interference $\sum_{e'} S_{e'e}$ that is at most
$(S_e-\beta N)/\beta$.  This amount can be considered to be the
``interference budget'' of $e$.  Let $\gamma_e \eqdf (\beta S_e)/(S_e-\beta N)$.
We define three measures of how much of the interference budget is ``consumed'' by a link $e'$.
\begin{align*}
  \hat a_{e'}(e)&\eqdf \frac{S_{e'e}}{S_e},
&
  a_{e'}(e)&\eqdf \gamma_e \cdot   \hat a_{e'}(e), \text{ and }
&
  \bar a_{e'}(e)&\eqdf \min\{1,a_{e'}(e)\}.
\end{align*}
The value of $a_{e'}(e)$ is called the \emph{affectance}~\cite{HW} of the link
$e'$ on the link $e$. The affectance is additive, so
for a set $L\subseteq\LL$, let $a_L(e) \eqdf \sum_{\{e'\in L : e'\neq e\}} a_{e'}(e)$.
\begin{proposition}[\cite{HW}]
  A set $L\subseteq \LL$ is \sinr-feasible iff $a_L(e) \leq 1$, for every $e\in L$.
\end{proposition}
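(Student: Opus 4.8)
The plan is to prove the equivalence by a direct algebraic manipulation of the \sinr\ inequality, working one link at a time since both characterizations are stated per-link: $L$ is \sinr-feasible exactly when the respective condition holds for \emph{every} $e \in L$, so it suffices to show that for a fixed $e$ the two per-link conditions coincide.

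First I would fix $e \in L$ and abbreviate the total interference at its receiver as $I_e \eqdf \sum_{e'\in L\setminus\{e\}} S_{e'e}$, so that the \sinr-feasibility condition for $e$ reads $S_e/(N+I_e)\geq \beta$. Since $N>0$ and all signal powers are nonnegative, the denominator $N+I_e$ is strictly positive, so cross-multiplication preserves the inequality and yields the equivalent form $S_e-\beta N \geq \beta I_e$. Next I would rewrite the affectance sum in terms of $I_e$: by additivity of affectance together with the definitions of $a_{e'}(e)$ and $\hat a_{e'}(e)$,
\[
a_L(e) \;=\; \gamma_e\sum_{e'\in L\setminus\{e\}}\frac{S_{e'e}}{S_e}
\;=\; \frac{\gamma_e}{S_e}\,I_e
\;=\; \frac{\beta}{S_e-\beta N}\,I_e,
\]
where the last step substitutes $\gamma_e=\beta S_e/(S_e-\beta N)$. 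Hence $a_L(e)\leq 1$ is equivalent to $\beta I_e \leq S_e-\beta N$, which is precisely the cross-multiplied \sinr\ condition obtained above. As this holds for each $e\in L$ separately, the two feasibility conditions agree link-by-link, establishing the claim.

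The only point needing care — and the sole place the argument is more than a rote calculation — is the sign of $S_e-\beta N$, which appears as the denominator of $\gamma_e$. I would observe that the definition of $\gamma_e$ implicitly presumes the positive-\snr\ regime $S_e>\beta N$; under this assumption $S_e-\beta N>0$, so both the cross-multiplication and the substitution of $\gamma_e$ are legitimate and every step above is valid. If instead $S_e\leq\beta N$, then the \sinr\ condition $S_e\geq\beta(N+I_e)$ forces $\beta I_e\leq S_e-\beta N\leq 0$, which with $I_e\geq 0$ leaves only a degenerate boundary case; thus outside the positive-\snr\ regime the link cannot tolerate any interference and both sides of the equivalence fail together. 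Noting this scope restriction closes the argument.
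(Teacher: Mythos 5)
The paper does not actually prove this proposition --- it is imported from \cite{HW} as a citation --- so there is no in-paper argument to compare against; your derivation is the standard one and it is correct. Fixing $e$ and writing $I_e$ for the total interference, the identity $a_L(e)=\gamma_e I_e/S_e=\beta I_e/(S_e-\beta N)$ turns $a_L(e)\leq 1$ into exactly the cross-multiplied \sinr\ inequality $S_e-\beta N\geq \beta I_e$, and the positivity of $S_e-\beta N$ that you flag is indeed the only point of care; the paper secures it globally via Assumption~1 ($S_e/N\geq(1+\eps)\beta$). One small correction to your closing remark, though: when $S_e<\beta N$ the two conditions do \emph{not} ``fail together.'' In that regime $\gamma_e$ is negative, so every $a_{e'}(e)$ is nonpositive and $a_L(e)\leq 0\leq 1$ holds vacuously, while \sinr-feasibility for $e$ fails outright; the equivalence is therefore genuinely false outside the positive-\snr\ regime rather than degenerating consistently. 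Since the proposition is only ever invoked under Assumption~1, this does not affect anything downstream, but the correct framing is that the restriction $S_e>\beta N$ is a hypothesis of the equivalence, not a case that can be absorbed into it.
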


Following~\cite{HW}, we define a set $L\subseteq \LL$ to be a
$p$-signal, if $a_L(e) \leq 1/p$, for every $e\in L$. Note that $L$ is
\sinr-feasible if $L$ is a $1$-signal.
We also define a set $L\subseteq \LL$ to be a
$\bar p$-\barsignal, if ${\bar a}_L(e) \leq 1/p$, for every $e\in L$. Note that $L$ is
\sinr-feasible if $L$ is a $\overline{(1+\eps)}$-\barsignal\ for some $\eps > 0$.

By Shannon's theorem on the capacity of a link in an additive white
Gaussian noise channel~\cite{gallager1968information}, it follows that the capacity
is a function of the \sinr. Since we use the same threshold $\beta$
for all the links, it follows that
the Shannon capacity of a link is either zero (if the \sinr\ is less than $\beta$)
or a value determined by $\beta$ (if the \sinr\ is at least $\beta$).
We set the length of a time slot and a packet length so that, if
interferences are not too large, each link can deliver one packet in one
time slot.  By setting a unit of flow to equal a packet-per-time-slot, all
links have unit capacities.
We do not assume that $\beta \geq 1$; in fact,
in communications systems $\beta$ may be smaller than one.

\paragraph{Multi-commodity flows.}
Recall that a function $g: {\cal L} \rightarrow {\mathbb{R}}^{\geq 0}$
is a flow from $s$ to $t$, where $s,t \in V$, if it satisfies capacity
constraints (i.e., $g(e) \leq 1$, for every $e \in {\cal L}$) and flow conservation
constraints in every vertex $v \in V \setminus \{s,t\}$ (i.e., $\sum_{e \in \text{in}(v)}g(e)=\sum_{e\in \text{out}(v)}g(e)$).

We use multi-commodity flows to model multi-hop traffic in a network.
The network consists of the nodes $V$ and the arcs $\LL$, where
each arc has a unit capacity.  There are $k$ commodities
$R_i=({\hat s}_i,{\hat t}_i,b_i)$, where ${\hat s}_i$ and ${\hat t}_i$ are the \emph{source} and
\emph{sink}, and $b_i$ is the \emph{demand} of the $i$th commodity.
Consider a vector $f=(f_1,\ldots,f_k)$, where each
$f_i$ is a flow from ${\hat s}_i$ to ${\hat t}_i$.
We use the following notation:
\begin{inparaenum}[(i)]
\item $f_i(e)$ denotes the flow of the $i$th flow along $e$,
\item $|f_i|$ equals the amount of flow shipped from ${\hat s}_i$ to ${\hat t}_i$,
\item $f(e)\eqdf\sum_{i=1}^k f_i(e)$,
\item $|f| \eqdf \sum_{i=1}^k |f_i|$.
\end{inparaenum}
A vector $f=(f_1,\ldots,f_k)$ is a multi-commodity flow if $f(e) \leq 1$,
for every $e \in \LL$.

We denote by $\FF$ the polytope of all multi-commodity flows $f=(f_1,\ldots,f_k)$
such that $|f_i|\leq b_i$, for every $i$.
For a $\rho>0$, we denote by $\FF_\rho\subseteq \FF$ the polytope of
all multi-commodity flows such that $|f_i|/b_i\geq \rho$.

\paragraph{Schedules and multi-commodity flows.}
We use periodic schedules to support a multi-commodity flow using
packet routing as follows.  We refer to a sequence
$\{L_t\}_{t=0}^{T-1}$, where $L_t\subseteq\LL$ for each $i$, as a
\emph{schedule}. A schedule is used periodically to determine which
links are active in each time slot.  Namely, time is partitioned into
disjoint equal time slots.  In time slot $t'$, the links in $L_t$, for
$t=t' \pmod T$ are \emph{active}, namely, they transmit.  Each active
link transmits one packet of fixed length in a time slot (recall that
all links have the same unit capacity).  The number of time slots $T$
is called the \emph{period} of the schedule. We sometimes represent a
schedule $S=\{L_t\}_{t=0}^{T-1}$ by a multi-coloring $\pi:\LL
\rightarrow 2^{\{0,\ldots,T-1\}}$.  The set $L_t$ simply equals the
preimage of $t$, namely, $L_t=\pi^{-1}(t)$, where
$\pi^{-1}(t)\eqdf\{e: t\in \pi(e)\}$.

An \emph{\sinr-schedule} is a sequence
$\{L_t\}_{t=0}^{T-1}$ such that $L_t$ is \sinr-feasible for every $t$.
Consider a multi-commodity flow $f=(f_1,\ldots,f_k)$ and a
schedule $S=\{L_t\}_{t=0}^{T-1}$. We say that the schedule $S$
\emph{supports} $f$ if
\begin{align*}
\forall e\in \LL:~~~  T \cdot f(e) \leq \left|\{t\in \{0,\ldots,T-1\} : e\in L_t\}\right|.
\end{align*}

The motivation for this definition is as follows. Consider a store-and-forward packet
routing network that schedules links according to the schedule $S$.
This network can
deliver packets along each link $e$ at an average rate of $f(e)$
packets-per-time-slot.

\paragraph{Buckets and signal diversity.}
We partition the links into buckets by their received power $S_e$ . Let $S_{\min} \eqdf \min_{e\in \LL} S_e$.
The $i$th bucket $B_i$ is defined by

\begin{align*}
  B_i &\eqdf \left\{e\in \LL \mid 2^i \cdot S_{\min} \leq S_e <2^{i+1}\cdot S_{\min}   \right\}.
\end{align*}
For a link $e\in \LL$, define $i(e)\eqdf\lfloor\log_2 (S_e/S_{\min})
\rfloor$.  Then, $e\in B_{i(e)}$.  The \emph{signal diversity} $\sigdiv$ of $\LL$
is the number of nonempty buckets.

\begin{lemma}\label{lem:dive}
    $$\sigdiv \leq \lceil \alpha \cdot \log_2\Delta + \log_2 \Gamma \rceil\:.$$
\end{lemma}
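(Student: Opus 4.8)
The plan is to bound the number of nonempty buckets by bounding the total spread of the received powers $S_e = P_e / d_e^{\alpha}$. Since buckets are defined by the dyadic ranges $[2^i S_{\min}, 2^{i+1} S_{\min})$, a link $e$ lands in bucket $B_{i(e)}$ with $i(e) = \lfloor \log_2(S_e/S_{\min}) \rfloor$. The number of nonempty buckets $\sigdiv$ is at most one more than the largest index $i(e)$ that occurs, i.e.\ $\sigdiv \leq 1 + \max_e i(e) = 1 + \lfloor \log_2(S_{\max}/S_{\min}) \rfloor$, where $S_{\max} \eqdf \max_{e \in \LL} S_e$. So the whole statement reduces to showing that the ratio $S_{\max}/S_{\min}$ is controlled by $\Gamma$ and $\Delta$.

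First I would write $\log_2(S_{\max}/S_{\min})$ explicitly in terms of powers and distances. Pick links $e^*, e_*$ achieving the max and min received power. Then
\begin{align*}
\log_2 \frac{S_{\max}}{S_{\min}}
= \log_2 \frac{P_{e^*}/d_{e^*}^{\alpha}}{P_{e_*}/d_{e_*}^{\alpha}}
= \log_2 \frac{P_{e^*}}{P_{e_*}} + \alpha \log_2 \frac{d_{e_*}}{d_{e^*}}.
\end{align*}
The first term is at most $\log_2(\pmax/\pmin) = \log_2 \Gamma$, since every transmission power lies in $[\pmin,\pmax]$. The second term is at most $\alpha \log_2(d_{\max}/d_{\min}) = \alpha \log_2 \Delta$, since every link length lies in $[d_{\min}, d_{\max}]$; note that the received power is \emph{increasing} in power and \emph{decreasing} in length, so the worst case combines the largest power with the shortest length and the smallest power with the longest length, and these bounds decouple cleanly. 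Combining gives $\log_2(S_{\max}/S_{\min}) \leq \alpha \log_2 \Delta + \log_2 \Gamma$.

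Then the final step is to assemble the ceiling. Since $\sigdiv$ is an integer and $\sigdiv \leq 1 + \lfloor \log_2(S_{\max}/S_{\min}) \rfloor \leq \lceil \log_2(S_{\max}/S_{\min}) \rceil$ (using that for any real $x \geq 0$ one has $1 + \lfloor x \rfloor = \lceil x \rceil$ unless $x$ is an integer, in which case $\lceil x\rceil = x \le 1+\lfloor x\rfloor$, so the bound $\sigdiv \le \lceil x \rceil$ holds when no link attains exactly $S_{\min}$ at the upper boundary; more carefully one argues the occupied indices range over $\{0,\ldots,\lfloor x\rfloor\}$ and their count is $\lfloor x\rfloor + 1 \le \lceil x \rceil$ when $x \notin \mathbb Z$, and equals $x$ when $x \in \mathbb Z$), the monotonicity of $\lceil\cdot\rceil$ yields $\sigdiv \leq \lceil \alpha \log_2 \Delta + \log_2 \Gamma \rceil$, as claimed.

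I do not expect any serious obstacle here; the only point demanding a little care is the off-by-one bookkeeping between the count of nonempty buckets and the ceiling in the statement, i.e.\ reconciling $1 + \lfloor x \rfloor$ with $\lceil x \rceil$. The cleanest route is to observe that all occupied bucket indices lie in $\{0, 1, \ldots, \lfloor \log_2(S_{\max}/S_{\min})\rfloor\}$, so $\sigdiv$ is at most the cardinality of this set, and then check that this cardinality is bounded by the claimed ceiling in both the integer and non-integer cases for $\log_2(S_{\max}/S_{\min})$.
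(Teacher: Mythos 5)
Your proof is correct and follows essentially the same route as the paper: both arguments reduce the claim to bounding $\log_2(S_{\max}/S_{\min})$ by $\log_2(\Gamma\cdot\Delta^{\alpha}) = \log_2\Gamma + \alpha\log_2\Delta$ via the worst-case pairing of largest power with shortest length against smallest power with longest length. If anything, you are more careful than the paper, which simply asserts $\sigdiv \leq \log_2(S_{\max}/S_{\min})$ without addressing the floor-versus-ceiling bookkeeping that you work out explicitly.
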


\begin{proof}
    Recall that $S_e\eqdf P_e/d_e^\alpha$.
    The signal diversity of $\LL$ is at most
    $\log_2 (S_{\max}/S_{\min})$, where $S_{\max}= \max \{S_e : e \in \LL\}$ and $S_{\min} = \min \{S_e : e \in \LL\}$.
    Hence,
    \begin{eqnarray*}
        \log_2 (S_{\max}/S_{\min}) & \leq & \log_2 \left(\frac{P_{\max}}{d^{\alpha}_{\min}} / \frac{P_{\min}}{d^{\alpha}_{\max}}\right) \\
            & = & \log_2 (\Gamma \cdot \Delta^{\alpha})\:,
    \end{eqnarray*}
    where $P_{\min} = \min \{P_e : e \in \LL\}, P_{\max} = \max \{P_e : e \in \LL\}, d_{\max} = \max \{d_e : e \in \LL\}, d_{\min} = \min \{d_e : e \in \LL\}$, as required.
\end{proof}

\paragraph{Power assignments.}
In the \emph{uniform power assignment}, all links transmit with the same
power, namely, $P_e=P_{e'}$ for every two links $e$ and $e'$.
In the \emph{linear power assignment}, all links receive with the same
power, namely, $S_e=S_{e'}$ for every two links $e$ and $e'$.

\paragraph{Assumption on \snr.} Our analysis requires that, for every
link $e$, $S_e/N\geq (1+\eps)\cdot \beta$, for a constant $\eps>0$.
Note that if $S_e/N=\beta$, then the link cannot tolerate any
interference at all, and $\gamma_e=\infty$.  Our assumption implies
that $\gamma_e \leq (1+\eps)\cdot \beta/\eps$.  This assumption can be
obtained by increasing the transmission power of links whose \snr\
almost equals $\beta$.  Namely, if $S_e/N \approx \beta$, then
$P_e\gets (1+\eps)\cdot P_e$.  A similar assumption is used
in~\cite{ChafekarCapacity}, where it is stated in terms of a
bi-criteria algorithm. Namely, the algorithm uses transmission powers
that are greater by a factor of $(1+\eps)$ compared to the transmission power
of the optimal solution.
\begin{assumption}\label{assume:gamma}
  For every link $e\in \LL$, $S_e/N \geq (1+\eps)\cdot \beta$.
\end{assumption}
\begin{proposition}\label{prop:gamma}
  Under Assumption~\ref{assume:gamma}, $\beta < \gamma_e \leq (1+\eps)\cdot \beta/\eps$.
\end{proposition}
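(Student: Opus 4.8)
The plan is to establish the two bounds separately, each by direct algebraic manipulation of the definition $\gamma_e = (\beta S_e)/(S_e - \beta N)$. Before anything else, I would confirm the expression is well-defined by checking that the denominator is strictly positive. Assumption~\ref{assume:gamma} states $S_e/N \geq (1+\eps)\cdot\beta$, so $S_e \geq (1+\eps)\beta N$; since $\eps > 0$ and $N > 0$ this gives $S_e > \beta N$, whence $S_e - \beta N > 0$ and $\gamma_e > 0$.

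For the lower bound, I would write $\gamma_e = \beta\cdot\frac{S_e}{S_e-\beta N}$. Because $\beta N > 0$, the denominator $S_e - \beta N$ is strictly smaller than $S_e$, so the fraction exceeds $1$ and therefore $\gamma_e > \beta$. For the upper bound, the key step is the rewrite $\gamma_e = \beta/(1 - \beta N/S_e)$. The \snr\ assumption gives $N/S_e \leq 1/((1+\eps)\beta)$, hence $\beta N/S_e \leq 1/(1+\eps)$, which yields $1 - \beta N/S_e \geq \eps/(1+\eps)$. Substituting this lower bound on the denominator gives $\gamma_e \leq \beta(1+\eps)/\eps = (1+\eps)\cdot\beta/\eps$, as claimed.

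I do not expect a genuine obstacle here: the proposition is a routine consequence of Assumption~\ref{assume:gamma}, stated immediately to record the quantitative bound on $\gamma_e$ that the assumption guarantees. The only point that deserves care is verifying the positivity of $S_e - \beta N$ before dividing; this is precisely the guarantee supplied by the assumption, and it is also what makes the affectance coefficients $a_{e'}(e) = \gamma_e\cdot\hat a_{e'}(e)$ finite, which is why the quantitative upper bound on $\gamma_e$ matters for the later analysis.
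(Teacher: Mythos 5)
Your proof is correct and follows essentially the same route as the paper: both rewrite $\gamma_e$ as $\beta/(1-\beta N/S_e)$, derive $\beta N/S_e \leq 1/(1+\eps)$ from Assumption~\ref{assume:gamma} for the upper bound, and observe that the denominator is positive and less than one for the lower bound. Your explicit check that $S_e - \beta N > 0$ is a minor (and welcome) addition, not a difference in approach.
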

\begin{proof}
    Recall that $\gamma_e \eqdf \frac{\beta S_e}{S_e - \beta N}= \frac{\beta}{1-\beta(N/S_e)}$.
    Assumption~\ref{assume:gamma} implies that  $S_e/N >\beta$.
    Hence, $\gamma_e > \beta$.

    Assumption~\ref{assume:gamma} implies that  $\beta \frac {N}{S_e} \leq \frac 1 {1+\eps}$.
    Hence,
    \begin{eqnarray*}
        \gamma_e    &=     & \frac{\beta}{1-\beta(N/S_e)} \\
                    & \leq & \frac{\beta}{1-\frac 1 {1+\eps}} \\
                    & = & (1+\eps)\cdot \beta / \eps\:,
    \end{eqnarray*}
    as required.
\end{proof}

\section{Problem Definition}\label{sec:problem}
The problem \MAX\ is formulated as follows.  The input consists of:
\begin{inparaenum}[(i)]
    \item A set of nodes $V$ in $\RR ^2$
    \item A set of links $\LL$. The capacity of each link equals one packet per time-slot.
    \item A set of requests $\{R_i\}_{i=1}^k$.  Each request is a
      $3$-tuple $R_i=({\hat s}_i,{\hat t}_i,b_i)$, where ${\hat s}_i \in V$ is the source,
      ${\hat t}_i \in V$ is the destination, and $b_i$ is the requested
      packet rate. We assume that every request can be routed, namely,
      there is a path from ${\hat s}_i$ to ${\hat t}_i$, for every $i\in[1..k]$.
      Since the links have unit capacities, we assume that the
      requested packet rate satisfies $b_i\leq n$.
\end{inparaenum}
The output is a multi-commodity flow $f=(f_1,\ldots,f_k) \in \FF$ and an
\sinr-schedule $S=\{L_t\}_{t=0}^{T-1}$ that supports $f$.
The goal is to maximize the total flow $|f|$.

The \MAXMIN\ problem has the same input and output. The goal, however,
is to maximize $\rho$, such that $f\in \FF_\rho$. Namely, maximize $\min_{i=1\ldots k} |f_i|/b_i$.

\section{Necessary Conditions: \sinr-feasibility for links in the same bucket}
\label{sec:necessary}
In this section we formalize necessary conditions so that a set of links in the same bucket is \sinr-feasible. In Section~\ref{sec:LP} we use these conditions to build a LP-relaxation for the problem.

We begin by expressing $\hat a_{e_1}(e_2)$ in terms of the distances $d_{e_1}, d_{e_2}, d_{e_1e_2}$.
Note that $\hat a_{e_1}(e_2)$, with respect to links that are in the same bucket, depends solely on
$d_{e_1}$ and $d_{e_1e_2}$. On the other hand, $\hat a_{e_1}(e_2)$, with respect to the uniform power model, depends solely on $d_{e_2}$ and $d_{e_1e_2}$. The proof of the following proposition is in Appendix~\ref{sec:proofs}.
\begin{proposition}\label{prop:aff}
    $$\forall i ~\forall ~e_1,e_2 \in B_i:~~  \frac 12 \cdot \left(\frac{d_{e_1}}{d_{e_1
          e_2}}\right)^\alpha < \hat a_{e_1}(e_2)  <2 \cdot \left(\frac{d_{e_1}}{d_{e_1
          e_2}}\right)^\alpha \:,$$
    $$\forall ~e_1,e_2 \in \LL:~~ \hat a_{e_1}(e_2) = \left(\frac{d_{e_2}}{d_{e_1
          e_2}}\right)^\alpha \text{ in the uniform power model.}$$
\end{proposition}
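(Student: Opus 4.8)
The plan is to reduce the proposition to the defining identity $\hat a_{e_1}(e_2) = S_{e_1 e_2}/S_{e_2}$ and then to rewrite the received interference power in terms of the \emph{received signal} powers $S_{e_1}$ and $S_{e_2}$, since these are exactly the quantities that govern bucket membership. First I would unfold the definitions. Using $S_{e_1 e_2} = P_{e_1}/d_{e_1 e_2}^\alpha$ and $S_{e_2} = P_{e_2}/d_{e_2}^\alpha$, we obtain
$$\hat a_{e_1}(e_2) = \frac{S_{e_1 e_2}}{S_{e_2}} = \frac{P_{e_1}}{P_{e_2}} \cdot \frac{d_{e_2}^\alpha}{d_{e_1 e_2}^\alpha}\:.$$

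The key algebraic step is to eliminate the transmission powers in favor of the received powers. Substituting $P_{e_j} = S_{e_j}\cdot d_{e_j}^\alpha$ for $j\in\{1,2\}$, the ratio $P_{e_1}/P_{e_2}$ becomes $(S_{e_1}/S_{e_2})\cdot(d_{e_1}/d_{e_2})^\alpha$, and after cancelling the factor $d_{e_2}^\alpha$ this yields the clean identity
$$\hat a_{e_1}(e_2) = \frac{S_{e_1}}{S_{e_2}} \cdot \left(\frac{d_{e_1}}{d_{e_1 e_2}}\right)^\alpha\:.$$
This identity is the heart of the argument: it isolates a ``distance part'' $(d_{e_1}/d_{e_1 e_2})^\alpha$, depending only on $d_{e_1}$ and $d_{e_1 e_2}$, and a ``power-ratio part'' $S_{e_1}/S_{e_2}$ that is controlled by the bucketing.

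To close the first inequality I would invoke the bucket definition. If $e_1,e_2\in B_i$, then both $S_{e_1}$ and $S_{e_2}$ lie in the half-open interval $[2^i\cdot S_{\min},\,2^{i+1}\cdot S_{\min})$, so the ratio is strictly two-sided bounded: $S_{e_1} < 2^{i+1}S_{\min}$ together with $S_{e_2}\geq 2^i S_{\min}$ gives $S_{e_1}/S_{e_2} < 2$, and the symmetric pairing gives $S_{e_1}/S_{e_2} > 1/2$. Multiplying the isolated distance part by these two bounds yields exactly the claimed estimate $\tfrac12(d_{e_1}/d_{e_1 e_2})^\alpha < \hat a_{e_1}(e_2) < 2(d_{e_1}/d_{e_1 e_2})^\alpha$. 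The second statement is even more immediate: in the uniform power model $P_{e_1}=P_{e_2}$, so the first display collapses to $(d_{e_2}/d_{e_1 e_2})^\alpha$ with no approximation at all.

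I do not anticipate a genuine obstacle; the only point requiring care is the asymmetric distance convention $d_{e_1 e_2}=d_{s_{e_1} r_{e_2}}\neq d_{e_2 e_1}$, so that the interference caused by $e_1$ at the receiver of $e_2$ is correctly matched with the distance $d_{e_1 e_2}$ and with the power $P_{e_1}$ (not $P_{e_2}$). Getting this matching right is precisely what makes the same-bucket expression depend on the interfering link's own length $d_{e_1}$ rather than on $d_{e_2}$, which is the contrast the surrounding discussion emphasizes.
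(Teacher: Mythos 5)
Your proposal is correct and follows essentially the same route as the paper: both unfold $\hat a_{e_1}(e_2)=S_{e_1e_2}/S_{e_2}$, factor out the ratio $S_{e_1}/S_{e_2}$ to obtain $\hat a_{e_1}(e_2)=(d_{e_1}/d_{e_1e_2})^\alpha\cdot(S_{e_1}/S_{e_2})$, and bound that ratio in $(1/2,2)$ via the bucket definition, with the uniform-power identity following directly from $P_{e_1}=P_{e_2}$.
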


%
%
%

Throughout this section we assume the following.
Let $L \subseteq \LL$ denote an \sinr-feasible set of links such that all the links in $L$
belong to same bucket $B_i$.
Let $e\in B_i$ denote an arbitrary link (not
necessarily in $L$).
\paragraph{Notation.}
Define:
\begin{align*}
  L^{\ell} & \eqdf  \{e' \in L : d_{e'} \leq  d_{e'e}\}, \text{ and}\\
  L^g & \eqdf  \{e' \in L : d_{e'} > d_{e'e}\}.
\end{align*}
\medskip

\subsection{A Geometric Lemma}
The following lemma claims that for every $e \in  B_i$ (not necessarily in $L$), there exits a set of
at most six ``guards'' that ``protect'' $e$ from interferences by transmitters in
$L^{\ell}$.

\begin{lemma}\label{lem:guards}
  There exists a set $G$ of at most six receivers of links in
  $L^{\ell}$ such that
\begin{align*}
\forall e' \in L^{\ell}~ \exists g \in G : d_{e'g} \leq 2\cdot d_{e'e}.
\end{align*}
\end{lemma}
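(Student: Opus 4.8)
The plan is to run an angular ``sectoring'' argument centered at the receiver $r_e$ of the fixed link $e$. I would partition the plane into six closed cones (sectors) of angular width $60^\circ$, all sharing the apex $r_e$. Every transmitter $s_{e'}$ with $e'\in L^{\ell}$ lies in at least one sector, so I assign each such $e'$ to one of them. For each nonempty sector $\Sigma_j$, among the links $e'\in L^{\ell}$ whose transmitter lies in $\Sigma_j$ I would select the link $e_j^*$ minimizing $d_{e'e}$ (the transmitter closest to $r_e$), and declare the guard of that sector to be $g_j\eqdf r_{e_j^*}$, the \emph{receiver} of $e_j^*$. Since there are six sectors, this produces a set $G$ of at most six guards, each being the receiver of a link in $L^{\ell}$, matching the statement.

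The geometric heart of the argument is the standard $60^\circ$-cone fact: if two points $p,q$ lie in a cone of angular width at most $60^\circ$ with apex $o$ and $|op|\geq|oq|$, then $|pq|\leq|op|$. This follows from the law of cosines, since the angle $\theta$ at $o$ satisfies $\cos\theta\geq 1/2$, giving $|pq|^2 = |op|^2 + |oq|^2 - 2|op|\,|oq|\cos\theta \leq |op|^2 + |oq|\,(|oq|-|op|) \leq |op|^2$, where the last step uses $|oq|\leq|op|$. I would apply this with $o=r_e$, $p=s_{e'}$, and $q=s_{e_j^*}$ for an arbitrary $e'\in L^{\ell}$ whose transmitter lies in $\Sigma_j$. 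By the minimality in the choice of $e_j^*$ we have $d_{e'e}\geq d_{e_j^*e}$, i.e. $|op|\geq|oq|$, so the cone fact yields $d_{s_{e'}s_{e_j^*}}\leq d_{e'e}$.

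It then remains to pass from the transmitter $s_{e_j^*}$ to the guard $g_j=r_{e_j^*}$ by the triangle inequality, invoking the defining property of $L^{\ell}$. Since $e_j^*\in L^{\ell}$, its length satisfies $d_{e_j^*}\leq d_{e_j^*e}\leq d_{e'e}$, whence
\begin{align*}
  d_{e'g_j} = d_{s_{e'}r_{e_j^*}} \leq d_{s_{e'}s_{e_j^*}} + d_{s_{e_j^*}r_{e_j^*}} = d_{s_{e'}s_{e_j^*}} + d_{e_j^*} \leq d_{e'e} + d_{e'e} = 2\,d_{e'e},
\end{align*}
which is exactly the asserted bound. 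The case $e'=e_j^*$ is immediate, as $d_{e'g_j}=d_{e_j^*}\leq d_{e_j^*e}\leq 2 d_{e'e}$.

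I expect the only real obstacle to be setting the geometry up correctly rather than any hard estimate: one must take the guard to be the receiver $r_{e_j^*}$ and not the transmitter, and recognize that the $L^{\ell}$-condition $d_{e_j^*}\leq d_{e_j^*e}$ is precisely what makes the triangle-inequality detour through $s_{e_j^*}$ cost only a single extra factor, keeping the final constant at $2$ while the number of guards stays at $6$. The width $60^\circ$ is the exact threshold giving $\cos\theta\geq 1/2$, so there is no slack to trade: a wider sector would break the cone inequality, and the worst case $d_{e_j^*}=d_{e_j^*e}=d_{e'e}$ shows the factor $2$ cannot be reduced by this scheme.
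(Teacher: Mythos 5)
Your proof is correct and follows essentially the same sectoring argument as the paper: six $60^\circ$ cones centered at $r_e$, the closest transmitter per sector, and the triangle inequality detour through it. The only (harmless) differences are that you take the guard to be the receiver of the selected link $e_j^*$ itself, whereas the paper takes the receiver in $L^{\ell}$ closest to $s_{e_j^*}$ --- your simpler choice works because $d_{e_j^*}\leq d_{e_j^*e}$ already yields the needed bound --- and that you derive the $60^\circ$-cone fact $d_{s_{e'}s_{e_j^*}}\leq d_{e'e}$ from the law of cosines rather than the paper's isosceles-triangle argument, to the same effect.
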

\begin{proof}
  The set $G$ is found as follows (see Figure~\ref{fig:guards1}):
  \begin{inparaenum}[(i)]
  \item Partition the plane into six sectors centered at $r_e$, each
    with an angle of $60^{\circ}$. Denote these sectors by
    $\sector(j)$, where $j \in \{1,\ldots,6\}$.
  \item For every $\sector(j)$, let $e_j \in L^{\ell}$ denote a link
    such that the transmitter $s_{e_j}$ is closest to $r_e$ among the
    transmitters in $\sector(j)$.
  \item Let $g_j$ denote a link in $L^{\ell}$ such that $r_{g_j}$ is
    closest to $s_{e_j}$ (If $\sector(j)$ lacks
    transmitters, then $g_j$ is not defined).
  \end{inparaenum}
  Let $G\eqdf\{r_{g_j}\}_{j=1}^6$ denote the set of guards.

\begin{figure}[H]
  \centering
    \includegraphics[width=0.7\textwidth]{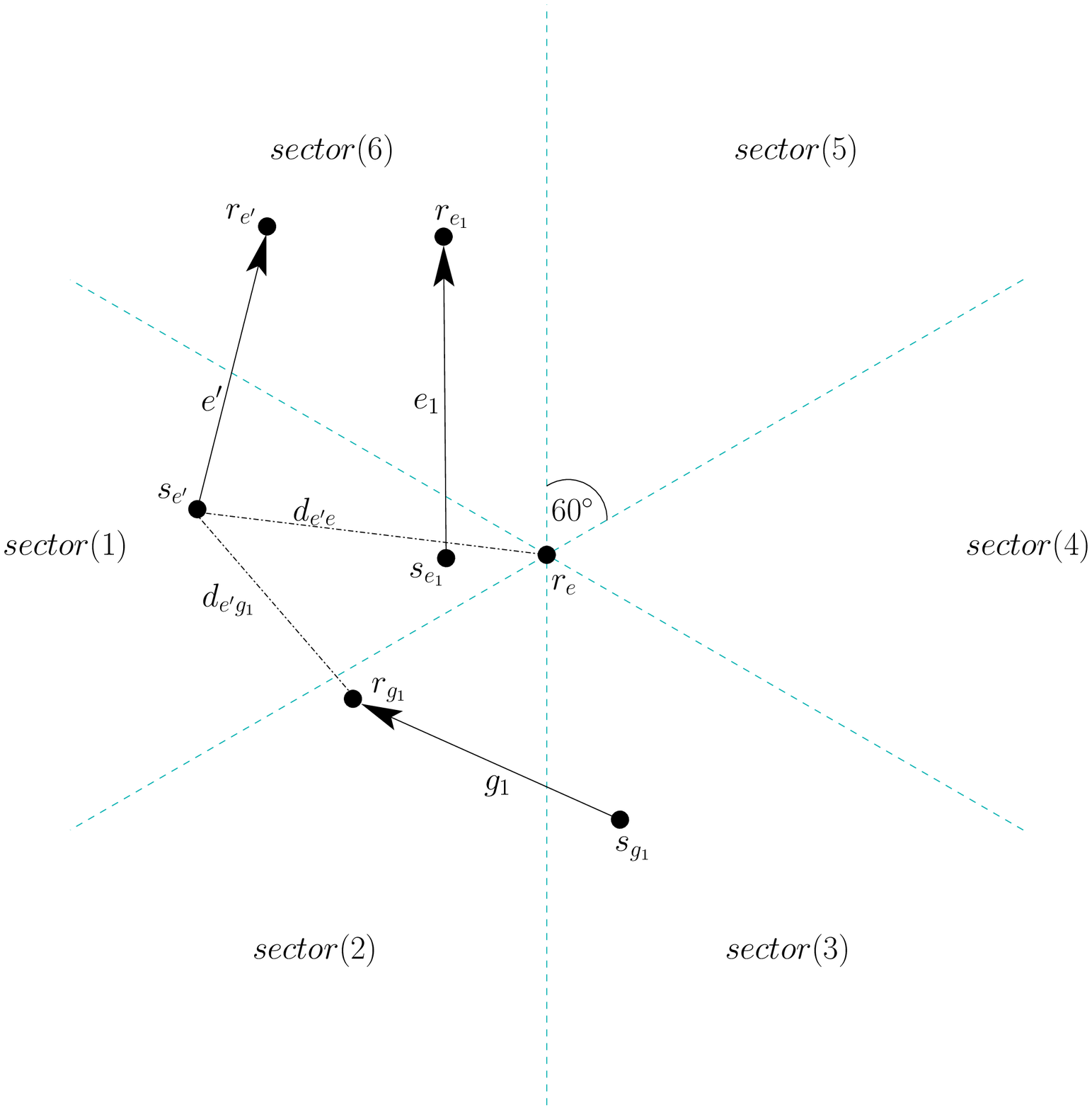}
    \caption{
    A depiction of the proof of Lemma~\ref{lem:guards}.
    }
  \label{fig:guards1}
\end{figure}

We first consider the case that $e' \in L^{\ell}$ is also a guard ($e'=g_j$).
In this case choose $g=e'$, and $d_{e'g}=d_{e'}$. But $d_{e'} \leq d_{e'e}$ since $e' \in L^{\ell}$, as required.
We now consider the case that $e' \in L^{\ell} \setminus G$.
  Given $e'\in L^{\ell} \setminus G$, let $j$ denote the sector that
  contains $s_{e'}$.  We claim that $d_{e'g_j} \leq 2\cdot d_{e'e}$.
  Consider first $e'=e_j$ (i.e., $s_{e_j}$ is a closest sender to
  $r_e$ in $\sector(j)$).  Since $r_{g_j}$ is a closest receiver to
  $s_{e_j}$, we have $d_{e_jg_j} \leq d_{e_j}$.  Since $e_j\in
  L^{\ell}$, we have $d_{e_j} \leq d_{e_je}$.  Thus, $d_{e_jg_j}\leq
  d_{e_je}$, as required.

  Consider now a link $e' \neq e_j$.  The following
  inequalities hold:
  \begin{align}
\label{eq:1}    d_{e'e} &\geq d_{e_je}, \text{ ($s_{e_j}$ is a closest sender to $r_e$)}\\
\label{eq:2}    d_{e'g_j} &\leq d_{s_{e'}s_{e_j}}+ d_{e_jg_j}, \text{ (triangle ineq. in $\triangle s_{e'}s_{e_j}r_{g_j}$)}\\
\label{eq:3}    d_{e_jg_j} &\leq d_{e_je}, \text{ (already proved for $e_j$)}\\
\label{eq:4}   d_{s_{e'}s_{e_j}} &\leq d_{e'e}. \text{ (proved below)}.
  \end{align}

We now prove Eq.~\ref{eq:4} (see Figure~\ref{fig:guards2}).  Let $s^*$ denote the point along the
segment from $r_e$ to $s_{e'}$ such that $d_{s^*r_e}=d_{e_je}$.  The
triangle $\triangle r_es_{e_j}s^*$ is an isosceles triangle.  Since
$\angle s_{e_j} r_e s^* \leq 60^\circ$, it follows that the base
angle $\angle r_e s_{e_j} s^* \geq
60^\circ$.  Hence, $\angle r_e s_{e_j} s_{e'} \geq \angle r_e s_{e_j} s^* \geq  60^\circ$.  Since $\angle s_{e_j} r_e s_{e'} \leq 60^\circ$,
it follows that $d_{s_{e'},s_{e_j}} \leq d_{e'e}$, as required.

\begin{figure}[H]
  \centering
    \includegraphics[width=0.55\textwidth]{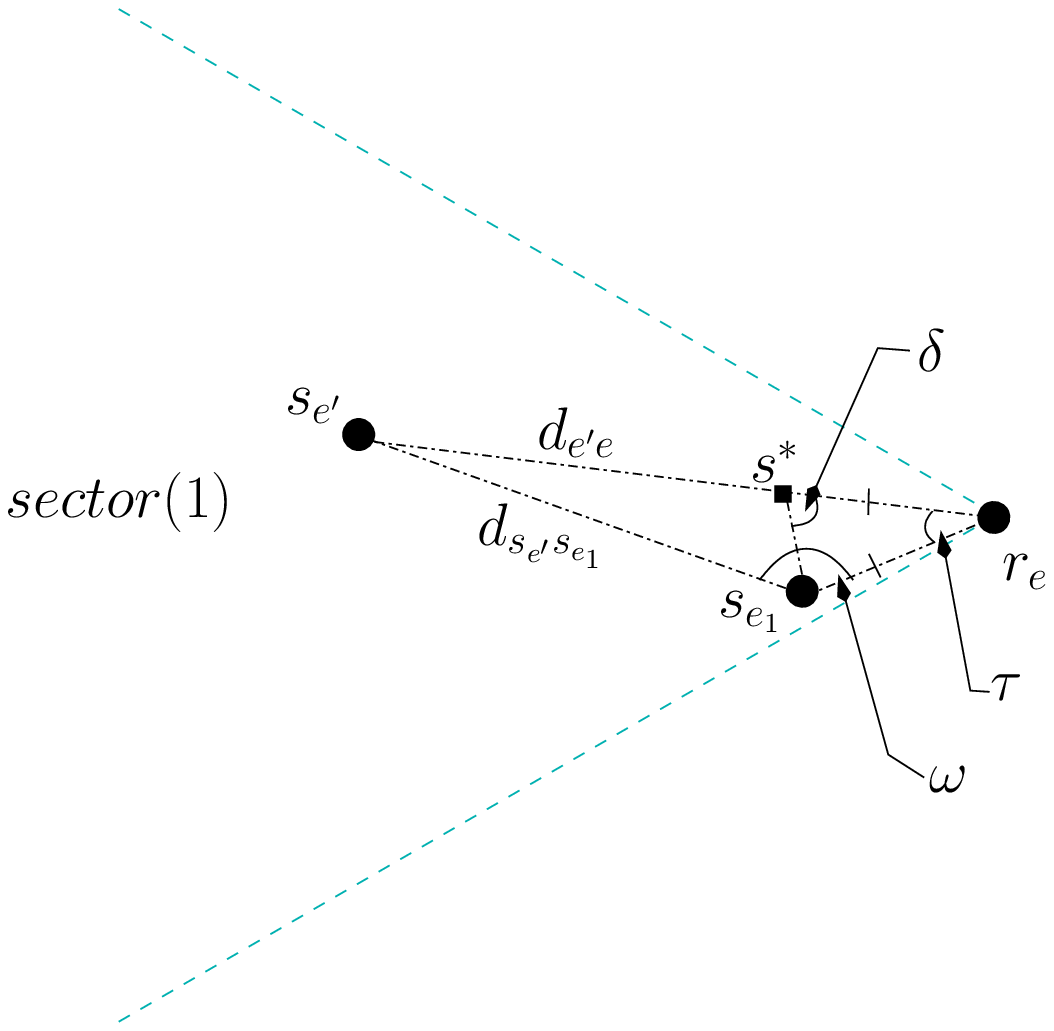}
    \caption{ The
triangle $\triangle r_es_{e_j}s^*$ is an isosceles triangle. The angle $\tau \leq  60^\circ$. The angle $\delta \geq  60^\circ$. The angle $\omega \geq \delta \geq  60^\circ$.}
  \label{fig:guards2}
\end{figure}

To complete the proof that $d_{e'g_j} \leq 2\cdot d_{e'e}$, observe that
\begin{align*}
  d_{e'g_j} \stackrel{\text{eq.~\ref{eq:2}}}{\leq} d_{s_{e'}s_{e_j}}+ d_{e_jg_j}
\stackrel{\text{eqs.~\ref{eq:3},\ref{eq:4}}}{\leq} d_{e'e} + d_{e_je}
\stackrel{\text{eq.~\ref{eq:1}}}{\leq} 2\cdot d_{e'e}.
\end{align*}
\end{proof}

\subsection{Necessary Conditions}
Recall that Let $L \subseteq \LL$ is an \sinr-feasible set of links that
belong to same bucket $B_i$.
Let $e\in B_i$ denote an arbitrary link (not
necessarily in $L$).
\begin{lemma}\label{lemma:in degree}
  $$\sum_{e' \in L ^{\ell}} {\bar a}_{e'}(e) = O(1).$$
\end{lemma}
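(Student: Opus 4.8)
The plan is to use the six guards from Lemma~\ref{lem:guards} as proxies. The difficulty is that $e$ need not belong to $L$, so we have no \sinr-feasibility constraint centered at $e$; however, each guard $g_j$ is a link in $L$, so its affectance budget satisfies $a_L(g_j)\le 1$. The geometric lemma lets me charge the affectance that each interferer $e'\in L^{\ell}$ exerts on $e$ to the affectance it exerts on a nearby guard, whose total budget is controlled by \sinr-feasibility.

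First I would partition $L^{\ell}$ into at most six groups by assigning each $e'$ to one guard $g\in G$ for which $d_{e'g}\le 2\,d_{e'e}$ (such a guard exists by Lemma~\ref{lem:guards}). Fix a guard $g=r_{g_j}$ and an interferer $e'$ assigned to it with $e'\ne g_j$. Writing $d_{e'g_j}=d_{s_{e'}r_{g_j}}=d_{e'g}$, the assignment gives $d_{e'e}\ge d_{e'g_j}/2$. Plugging this into the upper bound of Proposition~\ref{prop:aff} yields
$$\hat a_{e'}(e) < 2\Big(\frac{d_{e'}}{d_{e'e}}\Big)^\alpha \le 2\cdot 2^\alpha\Big(\frac{d_{e'}}{d_{e'g_j}}\Big)^\alpha,$$
while the lower bound of Proposition~\ref{prop:aff}, applied to the pair $e',g_j\in B_i$, gives $(d_{e'}/d_{e'g_j})^\alpha < 2\,\hat a_{e'}(g_j)$. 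Combining these, $\hat a_{e'}(e) < 2^{\alpha+2}\,\hat a_{e'}(g_j)$. Multiplying by $\gamma_e$ and inserting the factor $\gamma_{g_j}/\gamma_{g_j}$ converts this into a relation between the true affectances:
$$a_{e'}(e) = \gamma_e\,\hat a_{e'}(e) < 2^{\alpha+2}\,\frac{\gamma_e}{\gamma_{g_j}}\,a_{e'}(g_j).$$
By Proposition~\ref{prop:gamma} the ratio $\gamma_e/\gamma_{g_j}$ is at most $(1+\eps)/\eps$, a constant.

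Next I would sum over the group assigned to $g_j$. The interferers $e'\ne g_j$ in this group form a subset of $L\setminus\{g_j\}$, so by \sinr-feasibility of $L$ we have $\sum_{e'} a_{e'}(g_j) \le a_L(g_j) \le 1$, and therefore $\sum_{e'\ne g_j}\bar a_{e'}(e)\le\sum_{e'\ne g_j}a_{e'}(e) < 2^{\alpha+2}(1+\eps)/\eps = O(1)$. The single link $e'=g_j$, whose self-affectance is undefined, I handle separately using $\bar a_{g_j}(e)\le 1$. Adding this $O(1)$ bound over the at most six guards gives $\sum_{e'\in L^{\ell}}\bar a_{e'}(e)=O(1)$, as claimed.

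The main obstacle is precisely that $e\notin L$ in general, which deprives us of a feasibility constraint at $e$; the whole argument hinges on the geometric lemma supplying a constant number of guard links in $L$ whose own bounded interference budgets can absorb the charges. A secondary technical point is the transfer between $\hat a_{e'}(e)$ and $\hat a_{e'}(g_j)$: because both $e'$ and $g_j$ lie in the same bucket $B_i$, Proposition~\ref{prop:aff} makes the two $\hat a$ values comparable up to a constant factor depending only on $\alpha$, and Proposition~\ref{prop:gamma} keeps the $\gamma$ ratio constant. Replacing $a$ by $\bar a$ throughout is harmless, since $\bar a_{e'}(e)\le a_{e'}(e)$.
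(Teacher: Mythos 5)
Your proposal is correct and follows essentially the same route as the paper: the six guards of Lemma~\ref{lem:guards} are used as proxies, the two directions of Proposition~\ref{prop:aff} transfer $\hat a_{e'}(e)$ to $\hat a_{e'}(g_j)$ at a cost of $2^{\alpha+2}$, the \sinr-feasibility budget $a_L(g_j)\le 1$ at each guard absorbs the charge, and Proposition~\ref{prop:gamma} controls the $\gamma$ ratios. The only cosmetic difference is that you partition $L^{\ell}$ into disjoint groups per guard while the paper over-counts by summing over all guards for each interferer; both yield the same constant-order bound.
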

\begin{proof}
  By Lemma~\ref{lem:guards}, we find a set of ``guards'' $G \subseteq
  L^{\ell}$, such that:
    \begin{enumerate}[(i)]
        \item \label{item: g6} $|G| \leq 6$,
        \item \label{item:gsidt}$\forall e' \in L^{\ell}~~ \exists g \in G : d_{e'g} \leq 2\cdot d_{e'e}.$
    \end{enumerate}
First, let us bound $\sum_{e' \in L ^{\ell} \setminus G} {\hat a}_{e'}(e)$,

\begin{align}
   \sum_{e' \in L ^{\ell} \setminus G} {\hat a}_{e'}(e) & <    \sum_{e' \in L ^{\ell} \setminus G} 2 \cdot \left(\frac{d_{e'}}{d_{e'e}}\right)^{\alpha} \nonumber \\
  &\leq 2^{\alpha+1} \cdot \sum_{e' \in L ^{\ell}\setminus G} \sum_{g \in G}\left(\frac{d_{e'}}{d_{e'g}}\right)^{\alpha} \nonumber \\
  &\leq 2^{\alpha+2} \cdot \sum_{g \in G} \hat{a}_{L^{\ell}} (g) \label{eq:ll1}\:,
\end{align}
where the first line follows from Proposition~\ref{prop:aff}. The second line follows from
Lemma~\ref{lem:guards}. The third line, again, follows from Proposition~\ref{prop:aff}.

Since ${\bar a}_{e'}(e) \leq 1$, we obtain
\begin{eqnarray}
    \sum_{e' \in L ^{\ell}} {\bar a}_{e'}(e) &  \leq    &   \sum_{e' \in L^{\ell} \setminus G}  {\bar a}_{e'}(e) + |G| \label{eq:ll2}\:,
\end{eqnarray}

Hence,
\begin{eqnarray*}
     \sum_{e' \in L ^{\ell}} {\bar a}_{e'}(e) &\leq & \sum_{e' \in L^{\ell} \setminus G} a_{e'}(e) + |G|\\
                & = & \sum_{e' \in L^{\ell} \setminus G} \gamma_e \cdot {\hat a}_{e'}(e) + |G|\\
                & \leq & \gamma_e \cdot 2^{\alpha+2} \cdot \sum_{g \in G} \hat{a}_{L^{\ell}} (g) + |G|\\
                & \leq &   |G| \cdot\left(\frac {\gamma_e \cdot 2^{\alpha+2}}{\min_{g\in G}\gamma_g} + 1 \right) \\
                & \leq &   6 \left(\frac {(1+\eps)\cdot 2^{\alpha+2}}{\eps} + 1 \right)\:,
\end{eqnarray*}
where the first line follows from Equation~\ref{eq:ll2} and the fact that ${\bar a}_{e'}(e) \leq  a_{e'}(e)$.
The second line follows from the fact that $\gamma_e \cdot {\hat a}_{e'}(e) = a_{e'}(e)$.
The third line follows from Equation~\ref{eq:ll1}.
The fourth line follows since $L^{\ell}$ is \sinr-feasible, that is,  $a_{L^{\ell}} (g)\leq 1$ and $\hat a_{L^{\ell}} (g)\leq 1/\gamma_g$, for every $g \in G$. 
The last line follows from Proposition~\ref{prop:gamma}, Lemma~\ref{lem:guards}, and $|G| \leq 6$.
Since, $\alpha$ and $\eps$ are constants, the lemma follows.
\end{proof}
\begin{lemma}\label{lemma:out degree}
  $$\sum_{e' \in L^g } {\bar a}_{e'}(e) = O(1).$$
\end{lemma}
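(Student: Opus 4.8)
The plan is to bound the \emph{cardinality} of $L^g$ directly by a constant, rather than to dominate the affectance by that on a handful of guards as was done for $L^\ell$. The intuition is that every link in $L^g$ has its transmitter close to $r_e$ (since $d_{e'e} < d_{e'}$), so all these transmitters are clustered near $r_e$; if there were too many such links they would violate the \sinr-feasibility of $L$. Concretely, I would let $e^* \in L^g$ be a \emph{shortest} link of $L^g$, i.e.\ $d_{e^*} = \min_{e' \in L^g} d_{e'}$ (if $L^g = \emptyset$ the claim is trivial), and show that every other link of $L^g$ charges a constant amount of affectance onto $e^*$.

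The geometric heart of the argument is the estimate $d_{e'e^*} < 3\, d_{e'}$ for every $e' \in L^g$. First, since $e^* \in L^g$ we have $d_{e^*e} < d_{e^*}$, so by the triangle inequality $d(r_e, r_{e^*}) \le d_{e^*e} + d_{e^*} < 2\, d_{e^*} \le 2\, d_{e'}$, where the last step uses that $e^*$ is a shortest link of $L^g$. Second, since $e' \in L^g$ we have $d(s_{e'}, r_e) = d_{e'e} < d_{e'}$. Combining these via the triangle inequality on $s_{e'}, r_e, r_{e^*}$ gives
\[
 d_{e'e^*} = d(s_{e'}, r_{e^*}) \le d(s_{e'}, r_e) + d(r_e, r_{e^*}) < d_{e'} + 2\, d_{e'} = 3\, d_{e'}.
\]
Because $e'$ and $e^*$ lie in the same bucket $B_i$, Proposition~\ref{prop:aff} then yields $\hat a_{e'}(e^*) > \tfrac12 (d_{e'}/d_{e'e^*})^\alpha > \tfrac{1}{2 \cdot 3^\alpha}$, a constant lower bound that holds \emph{uniformly} over all $e' \in L^g \setminus \{e^*\}$.

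Finally I would invoke the \sinr-feasibility of $L$ at the link $e^*$ (which is legitimate since $e^* \in L^g \subseteq L$): as $a_L(e^*) \le 1$ and $a_{e'}(e^*) = \gamma_{e^*}\, \hat a_{e'}(e^*)$ with $\gamma_{e^*} > \beta$ by Proposition~\ref{prop:gamma}, we get $\sum_{e' \in L \setminus \{e^*\}} \hat a_{e'}(e^*) \le 1/\gamma_{e^*} < 1/\beta$. Restricting this sum to $L^g \setminus \{e^*\}$ and plugging in the uniform lower bound above gives $|L^g \setminus \{e^*\}| \cdot \tfrac{1}{2\cdot 3^\alpha} < 1/\beta$, hence $|L^g| \le 1 + 2\cdot 3^\alpha/\beta = O(1)$. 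Since $\bar a_{e'}(e) \le 1$ for every link, I conclude $\sum_{e' \in L^g} \bar a_{e'}(e) \le |L^g| = O(1)$, as required.

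I expect the main obstacle to be the geometric estimate $d_{e'e^*} < 3\, d_{e'}$: one must correctly exploit both defining inequalities of $L^g$ (for $e'$ and for $e^*$) together with the minimality of $d_{e^*}$ to cage the transmitter $s_{e'}$ and the receiver $r_{e^*}$ within a ball of radius $O(d_{e'})$ around $r_e$; once this is in place, the rest is a direct application of Proposition~\ref{prop:aff} and of feasibility at $e^*$. A minor point to verify is that the resulting constant stays $O(1)$ for every constant $\alpha$ and every fixed threshold $\beta$ (which may be smaller than $1$), so that no $p$-signal strengthening is needed here.
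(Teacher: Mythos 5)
Your proposal is correct and follows essentially the same route as the paper's proof: pick a shortest link $e^*\in L^g$, use the triangle inequality through $r_e$ (equivalently, the bound $d_{e'e^*}\le d_{e'e}+d_{e^*e}+d_{e^*}<3d_{e'}$) together with Proposition~\ref{prop:aff} to get a uniform constant lower bound on $\hat a_{e'}(e^*)$, then invoke \sinr-feasibility at $e^*$ and $\gamma_{e^*}>\beta$ to conclude $|L^g|=O(1)$ and hence $\sum_{e'\in L^g}\bar a_{e'}(e)\le|L^g|=O(1)$. No gaps.
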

\begin{proof}
Pick $e^*$ to be a shortest link in $L^{g}$.
It follows from Proposition~\ref{prop:aff} and the triangle inequality (see Figure~\ref{fig:triangle}) that
\[
    \forall e' \in L^{g} \setminus \{e^*\}:
    \hat a_{e'}(e^*) > \frac 12 \cdot \left( \frac{d_{e'}}{d_{e' e^*}}\right)^\alpha \geq
    \frac 12 \cdot \left( \frac{d_{e'}}{d_{e'e} + d_{e^*e} +d_{e^*}}\right)^\alpha\:.
\]
\begin{figure}[H]
  \centering
    \includegraphics[width=0.5\textwidth]{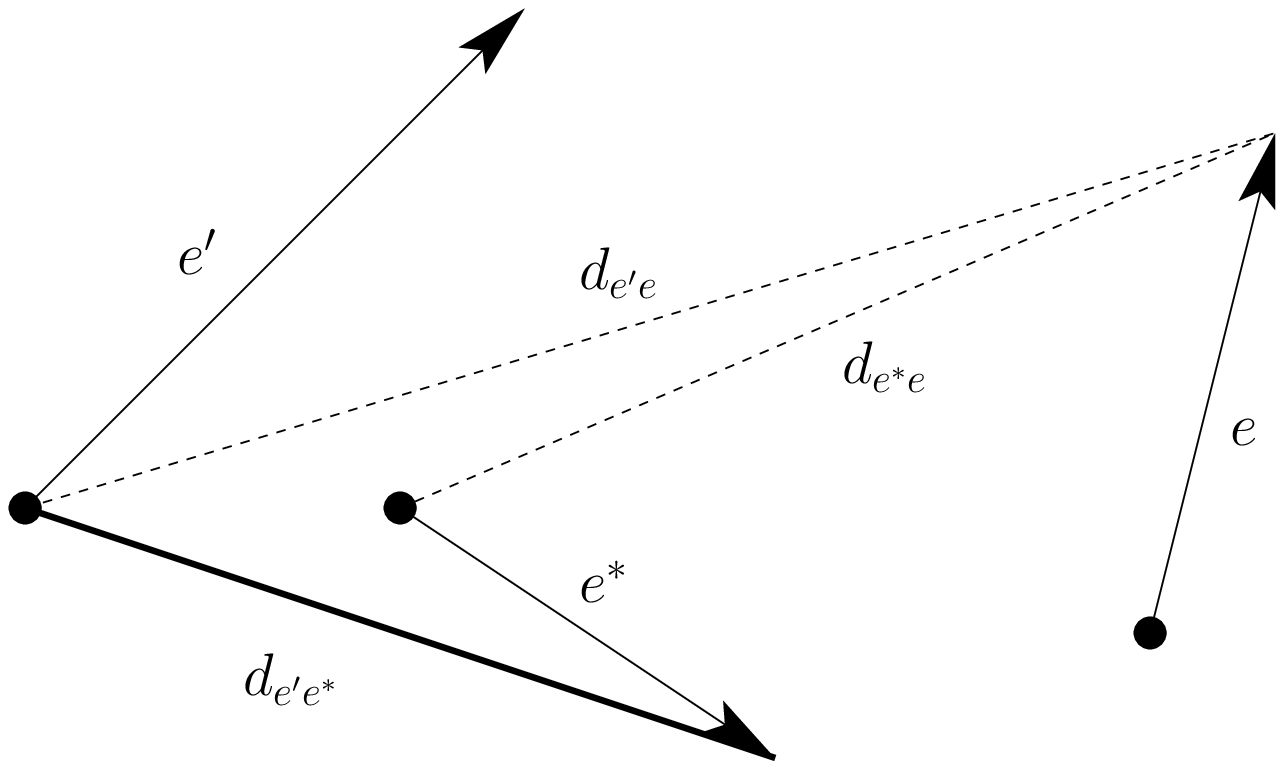}
    \caption{The distance $d_{e'e^*}$ is depicted by a bold segment. We bound $d_{e'e^*}$
    by applying the triangle inequality, that is the dashed segments and the length of link $e^*$, $d_{e^*}$.}
  \label{fig:triangle}
\end{figure}

Since $e',e^* \in L^{g}$, it follows that $d_{e'} > d_{e'e}$ and  $d_{e^*} > d_{e^*e}$.
Since $d_{e'} \geq d_e^* $ it follows that
\[
    \hat a_{e'}(e^*) > \frac 12 \cdot \left( \frac{d_{e'}}{3\cdot d_{e'}}\right)^\alpha > \frac 12 \cdot \frac{1}{3^\alpha}.
\]
Since $a_{L^g}(e^*)= \gamma_{e^*} \cdot {\hat a}_{L^g}(e^*)$, it follows:
$$a_{L^g}(e^*)= \gamma_{e^*}\cdot {\hat a}_{L^g}(e^*) > \frac 12 \cdot \frac{1}{3^\alpha}\cdot \gamma_{e^*} \cdot (|L^{g}| -1)\:.$$
Since $L^{g}$ is \sinr-feasible, it follows that $a_{L^g}(e^*) \leq 1$. Hence,
\begin{eqnarray*}
    \frac 12 \cdot \frac 1{3^\alpha} \cdot \gamma_{e^*}\cdot (|L^{g}| -1) & < & 1 \Rightarrow \\
    |L^{g}| & < & 2 \cdot 3^\alpha/\gamma_{e^*}+1\:.
\end{eqnarray*}
Proposition~\ref{prop:gamma} implies that $\frac {1}{\gamma_{e^*}} = O(1)$.
Since $\alpha$ is a constant, it follows that $|L^{g}| = O(1)$.
Since $\sum_{e' \in L^g } {\bar a}_{e'}(e) \leq |L^{g}|$, the lemma follows.
\end{proof}

\noindent
Lemmas~\ref{lemma:in degree} and~\ref{lemma:out degree} imply the following theorem.
\begin{theorem}\label{thm:in}
  Let $L$ denote an SINR-feasible set of links.  If $L \subseteq
  B_i$, then
     \[
\forall e\in  B_i:~~~\sum_{\{e' \in L : d_{e'} \geq d_{e}\}}\bar a_{e'}(e) ~\leq~ {\bar a}_L(e) + \bar a_{e}(e)= O(1).
     \]
\end{theorem}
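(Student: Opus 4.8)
The plan is to observe that Lemmas~\ref{lemma:in degree} and~\ref{lemma:out degree} already carry all the geometric weight, so that the theorem reduces to bookkeeping around a single quantity, $\sum_{e'\in L}\bar a_{e'}(e)$, together with the self-term $\bar a_e(e)$. Concretely, I would first record the left inequality. Since $\{e'\in L : d_{e'}\geq d_e\}\subseteq L\cup\{e\}$ and every summand $\bar a_{e'}(e)$ is nonnegative, restricting the index set only decreases the sum, so
\[
\sum_{\{e'\in L : d_{e'}\geq d_e\}}\bar a_{e'}(e)\;\leq\;\sum_{e'\in L\cup\{e\}}\bar a_{e'}(e)\;=\;\bar a_L(e)+\bar a_e(e).
\]
The last equality holds because $\bar a_L(e)$ is defined as the sum over $L\setminus\{e\}$, so adding the self-term $\bar a_e(e)$ recovers the full sum over $L\cup\{e\}$, whether or not $e\in L$. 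The restriction $d_{e'}\geq d_e$ is irrelevant to this bound; it merely records the shape in which the estimate is later consumed by the LP relaxation of Section~\ref{sec:LP}.

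Next I would bound $\bar a_L(e)+\bar a_e(e)$ by $O(1)$. The crucial point is that $L$ splits as the disjoint union $L=L^{\ell}\sqcup L^g$, since every $e'\in L$ satisfies exactly one of $d_{e'}\leq d_{e'e}$ or $d_{e'}>d_{e'e}$. Hence
\[
\sum_{e'\in L}\bar a_{e'}(e)\;=\;\sum_{e'\in L^{\ell}}\bar a_{e'}(e)\;+\;\sum_{e'\in L^g}\bar a_{e'}(e)\;=\;O(1),
\]
where the two summands are $O(1)$ by Lemma~\ref{lemma:in degree} and Lemma~\ref{lemma:out degree} respectively; both lemmas apply verbatim under the present hypotheses, as $L$ is SINR-feasible, $L\subseteq B_i$, and $e\in B_i$. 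Finally $\bar a_e(e)=\min\{1,a_e(e)\}\leq 1$, so $\bar a_L(e)+\bar a_e(e)\leq \sum_{e'\in L}\bar a_{e'}(e)+1=O(1)$, which closes the displayed chain in the statement.

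The substance of the result lives entirely in the two supporting lemmas — the six-guards construction of Lemma~\ref{lem:guards} underlying the $L^{\ell}$ estimate, and the packing argument bounding $|L^g|$ — so at this level there is no genuine obstacle. The only points demanding care are the treatment of the self-term $\bar a_e(e)$ and the fact that $e$ need not belong to $L$; these are precisely why the statement is phrased with an additive $\bar a_e(e)$ correction rather than with $\bar a_L(e)$ alone, and keeping that accounting straight is the one place where a careless argument could drop or double-count a term.
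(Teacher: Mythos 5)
Your proposal is correct and matches the paper, which gives no separate argument for Theorem~\ref{thm:in} beyond the remark that Lemmas~\ref{lemma:in degree} and~\ref{lemma:out degree} imply it via the disjoint decomposition $L=L^{\ell}\sqcup L^g$. Your explicit handling of the self-term $\bar a_e(e)\leq 1$ and of the case $e\notin L$ is exactly the intended bookkeeping.
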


\noindent
The following theorem follows from ~\cite[Thm 1]{K10}.
The proof of the following theorem is in Appendix~\ref{sec:proofs}.
\begin{theorem} \label{thm:out}
    Let $L$ denote an SINR-feasible set of links. If $L \subseteq
  B_i$, then
     \[
\forall e\in B_i:~~~           \sum_{\{e' \in L : d_{e'} \geq d_{e}\}}\bar a _{e}(e') = O(1).
     \]
\end{theorem}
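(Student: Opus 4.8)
The plan is to reduce the claimed bound on the capped out-affectance $\sum_{\{e'\in L:d_{e'}\ge d_e\}}\bar a_e(e')$ to a purely geometric sum, and then to control that sum by combining a local packing argument (in the spirit of Lemma~\ref{lemma:out degree}) with the feasibility budget of $L$, which is exactly what the structural bound of~\cite[Thm 1]{K10} supplies.

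First I would get rid of the constants. Since $e,e'\in B_i$, Proposition~\ref{prop:aff} gives $\hat a_e(e')<2\cdot(d_e/d_{ee'})^\alpha$, and Proposition~\ref{prop:gamma} gives $\gamma_{e'}\le (1+\eps)\beta/\eps=O(1)$. Using $\min\{1,Cx\}\le C\cdot\min\{1,x\}$ for $C\ge 1$, we obtain
\[
\bar a_e(e')=\min\{1,\gamma_{e'}\hat a_e(e')\}\le C\cdot\min\Bigl\{1,\bigl(d_e/d_{ee'}\bigr)^\alpha\Bigr\}
\]
for a constant $C=C(\alpha,\eps,\beta)$. Hence it suffices to prove $\Sigma\eqdf\sum_{\{e'\in L:d_{e'}\ge d_e\}}\min\{1,(d_e/d_{ee'})^\alpha\}=O(1)$, a statement about how the receivers $r_{e'}$ of the long links of the feasible set $L$ distribute around the single point $s_e$, weighted by distance.

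Next I would organize $\Sigma$ by distance from $s_e$: an inner ball $d_{ee'}\le d_e$ together with dyadic shells $d_{ee'}\in[2^s d_e,2^{s+1}d_e)$ for $s\ge 0$. The inner ball is handled exactly as in Lemma~\ref{lemma:out degree}: for two inner links $e',e''$ with $d_{e'}\le d_{e''}$ we have $d_{ee'},d_{ee''}\le d_e\le d_{e'}$, so the triangle inequality gives $d_{e'e''}\le d_{e'}+2d_e\le 3\cdot d_{e'}$, whence Proposition~\ref{prop:aff} yields $\hat a_{e'}(e'')>\tfrac12 3^{-\alpha}$ and Proposition~\ref{prop:gamma} yields $a_{e'}(e'')>\tfrac{\beta}{2}3^{-\alpha}$; feasibility of $L$ then caps the number of inner links at $O(\beta^{-1}3^{\alpha})=O(1)$, each contributing at most $1$. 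In the shells every link contributes only $O(2^{-s\alpha})$, so the remaining task is to bound how many long links of $L$ can place their receivers inside shell $s$.

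The crux, and the step I expect to be the main obstacle, is bounding the shell contributions uniformly for every constant $\alpha\ge 0$. A naive packing estimate gives $O(4^s)$ receivers in shell $s$, which against the weight $2^{-s\alpha}$ only converges for $\alpha>2$; for $\alpha\le 2$ such dense packings are in fact infeasible, so the convergence must be driven by the additive interference budget rather than by geometry alone. Concretely, one charges the contribution of $e$ to a shell against the mutual affectances of the links whose receivers lie in that shell — these are bounded below by Proposition~\ref{prop:aff} once the receivers share a region of scale $\gtrsim d_e$ — and absorbs the charge using the feasibility constraint $\sum_{e''}a_{e''}(e')\le 1$ at each victim. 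This reorganization, valid for all $\alpha\ge 0$ (here we only need $L\subseteq B_i$ so that Proposition~\ref{prop:aff} applies), is precisely Kesselheim's bound on the out-affectance of a link to the longer links of a feasible set~\cite[Thm 1]{K10}; invoking it on $L$ gives $\Sigma=O(1)$, and the theorem follows.
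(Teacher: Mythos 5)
Your proposal is correct and follows essentially the same route as the paper: both reduce the capped out-affectance sum to the geometric sum $\sum_{\{e'\in L:d_{e'}\ge d_e\}}\min\{1,(d_e/d_{ee'})^\alpha\}$ via Propositions~\ref{prop:aff} and~\ref{prop:gamma} (your $\min\{1,Cx\}\le C\min\{1,x\}$ manipulation is just the paper's chain of inequalities written in the other direction), and then import the $O(1)$ bound on that geometric sum from~\cite[Thm 1]{K10}. Your additional sketch of how Kesselheim's bound itself would be proved (inner ball plus dyadic shells with charging against the interference budget) is expository rather than a genuinely different argument, since in the end you invoke the cited theorem exactly as the paper does.
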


\section{LP Relaxation}\label{sec:LP}
In this section we formulate the linear program for the \MAX\ and \MAXMIN\ problems with arbitrary power assignments.  The linear
program formulation that we use for computing the multi-commodity flow
$f$ is as follows.

\begin{align}
\LPmax : \XTH^* =   \text{maximize } &|f| \text{ subject to} \nonumber\\
f &\in \FF \label{const:ff}\\
\forall i~~ \forall e & \in B_i ~~~ f(e)+\sum_{\{e'\in B_i : d_{e'}\geq d_e\}} (\bar{a}_{e'}(e) + \bar{a}_e(e')) \cdot f(e') \leq 1
\label{eq:int}
\end{align}

\begin{align}
\LPmin :\XNTH^*= & \text{maximize } \rho \text{ subject to} \nonumber\\
f&\in \FF_\rho \label{const:ffrho}\\
\forall i ~~\forall e & \in B_i ~~~~ f(e)+\sum_{\{e'\in B_i : d_{e'}\geq d_e\}} (\bar{a}_{e'}(e) + \bar{a}_e(e')) \cdot f(e') \leq 1\label{const:nes2}
\end{align}

Recall that  $\FF$ denotes the polytope of all multi-commodity flows $f=(f_1,\ldots,f_k)$
such that $|f_i|\leq b_i$, for every $i$.
Also recall that $\FF_\rho\subseteq \FF$ for $\rho>0$ denotes the polytope of
all multi-commodity flows such that $|f_i|/b_i\geq \rho$.
Constraints~\ref{const:ff},~\ref{const:ffrho} in $\LPmax$\ and $\LPmin$\ respectively require that the
$f$ is a feasible multi-commodity flow with respect to $\FF$ and $\FF_\rho$.

Constraints ~\ref{eq:int},~\ref{const:nes2} in $\LPmax$ and $\LPmin$ respectively require that for every bucket $B_i$ and for every link $e \in B_i$ the amount of flow $f(e)$ plus the amount of the weighted symmetric interferences is bounded by one. Note that this symmetric interference constraint is with respect to links that are longer than $e$.

The objective function of $\LPmax$ is to maximize the total flow $|f|$.
The objective function of $\LPmin$ is to maximize $\rho$, such that $f\in \FF_\rho$. Namely, maximize $\min_{i=1\ldots k} |f_i|/b_i$.

We prove on Section~\ref{sec:analysis} that the linear programs $\LPmax$ and $\LPmin$ are relaxations of the \MAX\ and \MAXMIN\ problems.

\section{Algorithm}\label{sec:alg}
\subsection{Algorithm description}

For simplicity, we assume in this section that all the links are in the same bucket, that is $\LL \subseteq B_i$ for some $i$.
In Section~\ref{sec:arbitraty} we show how to handle arbitrary power assignment.
In Section~\ref{sec:limited} we extend the algorithm so that it assigns limited powers.

\paragraph{Algorithm overview.}
We overview the algorithm for the \MAX\ problem.
Assume for simplicity that, $\LL \subseteq B_i$ for some $i$.
First, the optimal flow $f^*$ is obtained by solving the linear program $\LPmax$.
We need to find an \sinr-feasible schedule that supports a fraction of $f^*$.
Second, we color the links using greedy multi-coloring. This coloring induces a preliminary schedule, in which every color class is ``almost'' \sinr-feasible. This preliminary schedule is almost \sinr-feasible since in every color class and every link $e$, the affectance of links that are longer than $e$ on $e$ is at most 1. However, the affectance of shorter links on $e$ may be still unbounded.
Finally, we refine this schedule in order to obtain an \sinr-feasible schedule.
Note that the returned \sinr-feasible  schedule supports a fraction of the flow $f^*$. We show in Section~\ref{sec:analysis} that this fraction is at least $\Omega(1/\log n)$.

\paragraph{Algorithm description.}
The algorithm for the \MAX\ problem proceeds as follows.
\begin{enumerate}
\item \label{item:alg1}Solve the linear program $\LPmax$. Let $f^*$ denote the optimal solution.
\item \label{item:alg15} Remove flow paths that traverse edges with
  $f^*(e) < 1/(2nm)$.  Let $\hat f$ denote the remaining flow.
\item \label{item:alg2} Set $T=2nm$. Apply the greedy multi-coloring
  algorithm $\gcolor$ (see Section~\ref{sec:coloring}) on the input
  $((\LL,\LL^2),\hat{f},d,w,T)$, where the pair $(\LL,\LL^2)$ is a
  complete graph whose set of vertices is $\LL$, for every link in $e
  \in \LL$, $d(e) = d_e$, and $w(e,e')\eqdf \bar a_e(e')+ \bar
  a_{e'}(e)$ is a weight function over pair of links in $\LL$. Let
  $\pi:\LL\rightarrow 2^{\{0,\ldots T-1\}}$ denote the computed
  multi-coloring.
\item \label{item:alg3} Apply procedure $\disperse$ to each color class $(\pi^{-1}(t))$, where $t\in \{0,\ldots T-1\}$. Let $\{L_{t,i}\}_{i=1}^{\ell(t)}$ denote the
    dispersed subsets.
  \item \label{item:alg4}Return the schedule $\{L_{t,i}\}_{t=0..T-1,i=1..\ell(t)}$
    and the flow $f=(f_1,\ldots,f_k)$, where $f=\hat{f}/(2\cdot \ell(t))$.
\end{enumerate}

Clearly steps~\ref{item:alg1} and~\ref{item:alg4} are polynomial.
In Section~\ref{sec:coloring} we show that step~\ref{item:alg2} is polynomial.
In Section~\ref{sec:disperese} we show that \disperse\ is polynomial.
Therefore, the running time of the algorithm is polynomial.

\begin{rem}
    The following changes are needed in order to obtain an algorithm for the \MAXMIN\ problem:
    \begin{inparaenum}[(i)]
        \item In Item~\ref{item:alg1} solve the linear program $\LPmin$,
        \item in Item~\ref{item:alg15} remove flow paths that traverse edges
            with $f^*(e) < 1/(2n^2km)$,
        \item in Item~\ref{item:alg2} set $T=2n^2km$.
    \end{inparaenum}
\end{rem}

\subsection{Removing Minuscule Flow Paths}\label{sec:remove}

The greedy multi-coloring algorithm cannot support flows $f^*(e) <
1/(2nm)$.  We mitigate this problem simply by peeling off flow
paths that traverse edges with a flow smaller than $1/(2nm)$.  The
formal description of this procedure is as follows.
\begin{inparaenum}[(1)]
\item Initialize $\hat{f} \gets f$.
\item While there exists an edge $e$ with $\hat{f}(e) < 1/(2nm)$,
  remove flow from $\hat{f}$ until $\hat{f}(e)=0$.  This is done by
  computing flow paths for the flow that traverses $e$, and zeroing
  the flow along these paths.
\end{inparaenum}

\subsection{Greedy Multi-Coloring}\label{sec:coloring}
Let $G=(V,E)$ denote an undirected graph with edge weights $w : E
\rightarrow [0,1]$ and node demands $x : V \rightarrow [0,1]$.  Assume
an ordering of the nodes induced by distinct node lengths $d(v)$. For a set $V'\subset
  V$, let $w(V',u)\eqdf \sum_{v\in V'} w(v,u)$.
Assume that
\begin{align}\label{eq:color}
  \forall u \in V : x(u) + \sum_{\{v \in V : d(v) > d(u)\}}w(v,u)\cdot
  x(v) \leq 1\:.
\end{align}

Indeed, Constraints~\ref{eq:int},~\ref{const:nes2} in $\LPmax$ and $\LPmin$,  respectively, imply that the input to the greedy coloring algorithm satisfies the assumption in Equation~\ref{eq:color}.

\begin{lemma}[Greedy Coloring Lemma] \label{lemma:color}
For every integer $T$, there is
  multi-coloring $\pi : V \rightarrow 2^{\{0,\ldots,T-1\}}$, such that
  \begin{enumerate}
  \item $\forall c\in \{0,\ldots,T-1\} ~~\forall u \in \pi^{-1}(c):
    \sum_{\{v \in V: d(v) > d(u)\}} w(v,u)\leq 1$, \label{item:color1}
  \item $\forall u \in V : |\pi(u)| \geq \lfloor x(u) \cdot T
    \rfloor$.
  \end{enumerate}
\end{lemma}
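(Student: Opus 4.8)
The plan is to build $\pi$ by a greedy first-fit procedure that processes the nodes in order of \emph{decreasing} $d$-value (this total order exists because the lengths $d(v)$ are distinct). When a node $u$ is reached, every node $v$ with $d(v) > d(u)$ has already received its colors, while no node with $d(v) < d(u)$ has. For a color $c\in\{0,\dots,T-1\}$ I record the load that the already-colored longer nodes place on $u$ in that class,
\[
  W_c(u) \eqdf \sum_{\{v\,:\,d(v) > d(u),\ c\in\pi(v)\}} w(v,u)\:,
\]
and call $c$ \emph{good for $u$} if $W_c(u)\le 1$. Observe that $W_c(u)$ is exactly the left-hand side of Condition~\ref{item:color1} for $u$ in class $c$, since that sum ranges over $v\in\pi^{-1}(c)$ with $d(v)>d(u)$, i.e. over the already-colored nodes sharing color $c$. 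Hence if the procedure only ever assigns good colors, Condition~\ref{item:color1} holds automatically; it then assigns to $u$ an arbitrary set of $\lfloor x(u)\cdot T\rfloor$ good colors, which secures the second requirement by construction, \emph{provided enough good colors exist}.

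The crux is to show that $u$ has at least $\lfloor x(u)\cdot T\rfloor$ good colors, and I would do this by an averaging bound on the total load. Summing $W_c(u)$ over all colors and using that, inductively, each already-processed node $v$ carries exactly $|\pi(v)| = \lfloor x(v)\cdot T\rfloor \le x(v)\cdot T$ colors,
\[
  \sum_{c=0}^{T-1} W_c(u)
  = \sum_{\{v\,:\,d(v)>d(u)\}} |\pi(v)|\cdot w(v,u)
  \le T\cdot\!\!\sum_{\{v\,:\,d(v)>d(u)\}}\!\! x(v)\cdot w(v,u)
  \le T\cdot\bigl(1-x(u)\bigr)\:,
\]
where the last inequality is precisely the hypothesis in Equation~\ref{eq:color}. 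Let $B$ be the number of bad colors for $u$. Each bad color contributes $W_c(u)>1$, so (trivially if $B=0$, and otherwise strictly) $B < \sum_{c\ \text{bad}} W_c(u) \le \sum_{c} W_c(u) \le T(1-x(u))$. Therefore the number of good colors is $T-B > T\cdot x(u) \ge \lfloor x(u)\cdot T\rfloor$, so the required $\lfloor x(u)\cdot T\rfloor$ good colors are indeed available, which is what the greedy step consumes.

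Finally I would check that the greedy choice never spoils a constraint established earlier, which is the only genuine subtlety. Because Condition~\ref{item:color1} for a node $v$ sums weights only from nodes strictly longer than $v$, inserting the shorter node $u$ into a class that $v$ occupies leaves $v$'s constraint unchanged; thus once a node is feasibly colored it stays feasible, which both validates Condition~\ref{item:color1} globally and justifies the inductive assumption ($|\pi(v)| = \lfloor x(v)\cdot T\rfloor$) used in the averaging bound. I expect this averaging step to be the main obstacle: it is where the ``fractional'' LP-type guarantee of Equation~\ref{eq:color} is converted into an integral packing of colors, and it is exactly the one-directional (longer-onto-shorter) structure of the weights that makes the argument close.
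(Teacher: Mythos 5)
Your proposal is correct and matches the paper's proof in all essentials: both run a first-fit greedy coloring in descending length order and bound the number of bad colors by $\sum_{c} w(\pi^{-1}(c),u) \le \sum_{\{v: d(v)>d(u)\}} \lfloor x(v)T\rfloor\, w(v,u) \le T(1-x(u))$ via Equation~\ref{eq:color}. The paper phrases this count as a proof by contradiction ($|\Cbad| > T-\lfloor x(u)T\rfloor$ leads to $T+1\le T$) while you argue directly that at least $\lfloor x(u)T\rfloor$ good colors remain, but this is the same computation.
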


\begin{algorithm}
\caption{$\gcolor((V,E),x,d,w,T)$ - greedy multi-coloring of $V$. }
\label{alg:color}
  \begin{enumerate}
  \item Scan the vertices in descending $d(v)$ length order, let $u$ denote the current node.
    \begin{enumerate}
    \item $\Cbad \gets \{ c\in \Tzo : w(\pi^{-1}(c),u)>1 \}$.
    \item If $|\Cbad| > T-\lfloor x(u) \cdot T \rfloor$, then return
      ``FAIL''.
    \item $\pi(u) \gets $ first $\lfloor x(u) \cdot T \rfloor$ colors
      in $\Tzo\setminus \Cbad$.
    \end{enumerate}
  \item Return $(\pi)$.
  \end{enumerate}
\end{algorithm}

The running time of Algorithm~\ref{alg:color} is at most $O(T^2\cdot |V|\cdot|E|)$.
Since $T,|E|$ and $|V|$ are polynomial, it follows that the running time is polynomial.

\begin{proof}
  We apply a ``first-fit'' greedy multi-coloring listed in
  Algorithm~\ref{alg:color}.  We now prove that this algorithm
  succeeds.

  Let $b(u) \eqdf \lfloor x(u) \cdot T \rfloor$.  Assume, for the sake
  of contradiction that, $|C^{\text{bad}}_u| > T-b(u)$, hence,
    \begin{eqnarray}
      T-b(u)+1&\leq& |C^{\text{bad}}_u|\nonumber\\
     & \leq & \sum_{c \in C^{\text{bad}}_u} w(\pi^{-1}(c),u) \nonumber\\
      &\leq  & \sum_{\{v:d(u)<d(v)\}} |\pi(v)|\cdot w(v,u) \nonumber\\
      & =    & \sum_{\{v:d(u)<d(v)\}} b(v) \cdot w(v,u)\:.\label{eq:cond1}
    \end{eqnarray}
    The third line follows from the fact that vertices are scanned in
    a descending length order, and by a
    rearrangement of the summation order.
 By adding $b(u)$ to both sides, we obtain:
    \begin{align}
        T+1 \leq \lfloor x(u) \cdot T \rfloor+\sum_{\{v:d(u)<d(v)\}} \lfloor x(v) \cdot T \rfloor \cdot w(v,u). \label{eq:cond2}
    \end{align}
    We divide Eq.~\ref{eq:cond2} by $T$ to obtain a contradiction to Eq.~\ref{eq:color}, as required.
We conclude, that the greedy coloring succeeds, and the lemma follows.
\end{proof}

\subsection{The dispersion procedure \disperse}\label{sec:disperese}
The input to the dispersion procedure \disperse\ consists of a set $L\subseteq
\LL$ of links that are assigned the same color by the multi-coloring
procedure (see Algorithm~\ref{alg:color} in Section~\ref{sec:coloring}).
This implies that
\begin{align}\label{eq:color1}
  \forall e\in L: \sum_{\{e'\in L \setminus \{e\}: d_{e'}\geq d_e\}} (\bar a_e(e')+
  \bar a_{e'}(e))\leq 1.
\end{align}

The dispersion procedure works in two phases.  In the first phase, $L$
is partitioned into $\overline{1/3}$-\barsignal\ sets $\{L_i\}_i$. In the
second phase, each subset $L_i$ is further partitioned into $\overline{7/6}$-\barsignal\ sets $\{L_{i}\}_{i=1}^{\ell(t)}$.
Recall that a set of links $L_i$ is \sinr-feasible if $L_i$ is a $\overline{(1+\eps)}$-\barsignal\ for some $\eps > 0$.
Since every set in  $\{L_{i}\}_{i=1}^{\ell(t)}$ is $\overline{(7/6)}$-\barsignal, it follows that every set in $\{L_{i}\}_{i=1}^{\ell(t)}$ is \sinr-feasible.

In Algorithm~\ref{alg:disperse3}, we list the first phase of the
dispersion procedure.  Note that if a $\overline{1/3}$-\barsignal\ set $J^i$ is
always found in Line~\ref{line:J}, then $L$ is dispersed into at most
$\log_2 |L|$ subsets. In Lemma~\ref{lemma:disperse} we prove that this
is indeed possible.

\begin{algorithm}
  \caption{$\frac 13$-\disperse$(L)$ : partition $L\subseteq \LL$ into $O(\log n)$
    $\overline{1/3}$-\barsignal\ sets.}
\label{alg:disperse3}
      \begin{enumerate}
      \item $i\gets 0$ and $L^0 \gets L$.
      \item \label{item:disalg2}while $L^i\neq \emptyset$ do
        \begin{enumerate}
        \item find a $\overline{1/3}$-\barsignal\ set $J^i\subseteq L^i$ such that
          $|J^i|\geq |L^i|/2$.
\label{line:J}
        \item $L^{i+1}\gets L^i\setminus J^i$ and $i\gets i+1$.
        \end{enumerate}
      \end{enumerate}
\end{algorithm}

The second phase follows~\cite[Thm 1]{HW}. This phase is implemented by
two first-fit bin packing procedures. In the first procedure, open $7$
bins, scan the links in some order and assign each link to the first
bin in which its affectance is at most $3/7$.  In the second procedure,
partition each bin into $7$ sub-bins.  Scan the links in the reverse
order, and again, assign each link to the first bin in which
its affectance is at most $3/7$.

Proposition~\ref{prop:disperse} implies that step~\ref{item:disalg2} in Algorithm~\ref{alg:disperse3} terminates after $O(\log m)$ iterations. Each of these iterations is polynomial.
The second phase of the \disperse\ algorithms is clearly polynomial.
Therefore, the running time of the \disperse\ algorithm is polynomial.

\section{Algorithm Analysis}\label{sec:analysis}

In this section we analyze the algorithm presented in
Section~\ref{sec:alg}.  Recall that it is assumed that all the links
are in the same bucket, that is $\LL \subseteq B_i$ for some $i$.
First, we prove that the linear program $\LPmax$ is a fractional
relaxation of the \MAX\ problem.  We
then show that the greedy coloring computes a schedule that supports
the flow given by the LP. Unfortunately, this schedule is not an
SINR-feasible schedule. We then prove that the refinement procedure
(Step~\ref{item:alg3} of the algorithm) generates an SINR-feasible schedule with an
$O(\log n)$ increase in the approximation ratio.

\medskip
\noindent
Let $f^*$ denote an optimal solution of the linear program $\LPmax$,
i.e., $\XTH^*=|f^*|$.  The following lemma shows that the linear
program $\LPmax$ is a relaxation of the \MAX\ problem.
\begin{lemma}\label{lemma:relax}
  There exists a constant $\lambda \geq 1$ such that, if
  $S=\{L_t\}_{t=0}^{T-1}$ is an \sinr-feasible schedule that supports
  a multi-commodity flow $f$, then $f/\lambda$ is a feasible solution
  of the linear program $\LPmax$. Hence, $F^*\geq |f|/\lambda$.
\end{lemma}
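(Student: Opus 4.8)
The plan is to show that any SINR-feasible schedule $S$ that supports a flow $f$ can be converted, after scaling down by a universal constant $\lambda$, into a feasible point of $\LPmax$. Since the objective of $\LPmax$ is exactly $|f|$ and the constraint $f \in \FF$ is a property of the flow alone (inherited directly by $f/\lambda$ because scaling down preserves capacity and demand bounds), the only real work is to verify the symmetric interference constraint~\ref{eq:int}. So I would first dispense with the flow-polytope membership in one sentence, and then concentrate entirely on proving that for every $e \in B_i$,
\[
  \frac{f(e)}{\lambda} + \sum_{\{e'\in B_i : d_{e'}\geq d_e\}} (\bar{a}_{e'}(e) + \bar{a}_e(e')) \cdot \frac{f(e')}{\lambda} \leq 1.
\]

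The key idea is to translate the time-averaged flow back into an average over the SINR-feasible slots, and then invoke Theorems~\ref{thm:in} and~\ref{thm:out} slot-by-slot. Concretely, because $S$ supports $f$, we have $T\cdot f(e) \leq |\{t : e\in L_t\}|$ for every $e$, so $f(e) \leq \frac{1}{T}\sum_{t=0}^{T-1} \mathbf{1}[e\in L_t]$. Substituting this bound into the left-hand side of the target inequality, I would swap the order of summation to write everything as an average over slots:
\[
  \sum_{\{e'\in B_i : d_{e'}\geq d_e\}} (\bar{a}_{e'}(e) + \bar{a}_e(e')) \cdot f(e')
  \leq \frac{1}{T}\sum_{t=0}^{T-1} \sum_{\{e'\in L_t : d_{e'}\geq d_e\}} (\bar{a}_{e'}(e) + \bar{a}_e(e')).
\]
For each fixed slot $t$, the set $L_t$ is SINR-feasible and contained in $B_i$, so Theorem~\ref{thm:in} bounds $\sum_{\{e'\in L_t : d_{e'}\geq d_e\}} \bar a_{e'}(e) = O(1)$ and Theorem~\ref{thm:out} bounds $\sum_{\{e'\in L_t : d_{e'}\geq d_e\}} \bar a_e(e') = O(1)$, with the constants being universal (depending only on $\alpha$ and $\eps$). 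Hence each inner sum is at most some constant $c$, the average over $t$ is at most $c$, and likewise $f(e)\leq 1$ trivially. Taking $\lambda = c+1$ makes the whole left-hand side at most $1$ after dividing by $\lambda$.

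One subtle point I would handle carefully is that Theorems~\ref{thm:in} and~\ref{thm:out} are stated for a link $e \in B_i$ that need not lie in $L_t$, which is exactly what I need since I range $e$ over all of $B_i$ regardless of whether it transmits in slot $t$; this is why those theorems were phrased for arbitrary $e\in B_i$. A second point is that the theorems bound sums of the form $\sum_{\{e'\in L_t : d_{e'}\geq d_e\}}$, and I want the sum to include the term $e'=e$ or exclude it consistently; the statement of Theorem~\ref{thm:in} already accounts for the self-term via ``$\bar a_L(e) + \bar a_e(e)$'', so I would absorb the possible $e'=e$ contribution into the constant rather than tracking it separately.

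\textbf{The main obstacle} is purely bookkeeping: making sure the constant $\lambda$ is genuinely universal. The danger is that naively one might let $\lambda$ depend on $T$ or on $|L_t|$, but Theorems~\ref{thm:in} and~\ref{thm:out} give bounds that depend only on $\alpha$ and $\eps$, so the slot-wise constant is the same for every $t$ and the average inherits it. Once that is recognized, the proof is a short computation; the conceptual heavy lifting was already done in Section~\ref{sec:necessary}, and this lemma is essentially the statement that those necessary conditions, applied per slot and averaged, are exactly the LP constraints up to a constant factor.
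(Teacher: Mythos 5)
Your proposal is correct and follows essentially the same route as the paper's own proof: bound $f(e')$ by the fraction of slots containing $e'$, rewrite the interference sum as an average over the SINR-feasible slots $L_t$, apply Theorems~\ref{thm:in} and~\ref{thm:out} per slot to get a universal constant, add $f(e)\leq 1$, and set $\lambda$ to be that constant bound. The points you flag as subtle (that $e$ need not lie in $L_t$, and that the constant is independent of $T$) are exactly the ones the paper's argument relies on.
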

\begin{proof}
  Clearly $f/\lambda \in \FF$. Thus, we only need to prove that
  $f/\lambda$ satisfies the constraint in Eq.~\ref{eq:int}.  Consider
  an \sinr-feasible set $L_t$ and an arbitrary link $e$. By,
  Theorems~\ref{thm:in} and~\ref{thm:out}, $$\sum_{\{e'\in L_t :
    d_{e'}\geq d_e\}} (\bar{a}_{e'}(e) + \bar{a}_e(e')) \leq O(1).$$  It
  follows that
\begin{align}\label{eq:avg}
  \frac{1}{T}\cdot
\sum_{t=0}^{T-1} \sum_{\{e'\in L_t : d_{e'}\geq d_e\}}
  (\bar{a}_{e'}(e) + \bar{a}_e(e')) \leq O(1).
\end{align}
Since $f(e')\leq \frac 1T \cdot |\{t: e'\in L_{t}\}|$,
We conclude that
\begin{align}\label{eq:f}
    \frac{1}{T}\cdot
\sum_{t=0}^{T-1} \sum_{\{e'\in L_t : d_{e'}\geq d_e\}}
  (\bar{a}_{e'}(e) + \bar{a}_e(e'))
&\geq \sum_{\{e'\in {\cal L} : d_{e'}\geq d_e\}}
  (\bar{a}_{e'}(e) + \bar{a}_e(e')) \cdot f(e').
\end{align}
Since $f(e)\leq 1$, we conclude from Eqs.~\ref{eq:avg} and~\ref{eq:f} that
\begin{align}\label{eq:last}
  f(e)+ \sum_{\{e'\in {\cal L} : d_{e'}\geq d_e\}} (\bar{a}_{e'}(e) +
  \bar{a}_e(e')) \cdot f(e')\leq O(1).
\end{align}
Let $\lambda>0$ denote a constant that bounds the left-hand side in
Eq.~\ref{eq:last}.  Then, $f/\lambda$ satisfies the constraints in
Eq.~\ref{eq:int}, as required, and the lemma follows.
\end{proof}

Analogously, one could prove also that
the linear program $\LPmin$ is a relaxation
of the \MAXMIN\ problem.
\begin{lemma}
  Suppose $S=\{L_t\}_{t=0}^{T-1}$ is an \sinr-feasible schedule that
  supports a multi-commodity flow $f$.  If $\rho\eqdf \min_{i=1 \ldots k} |f_i|/b_i$,
  $R^* \geq \rho/\lambda$, for the same constant $\lambda \geq 1$ in
  Lemma~\ref{lemma:relax}.
\end{lemma}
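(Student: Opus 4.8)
The plan is to mirror the proof of Lemma~\ref{lemma:relax} almost verbatim, exploiting the fact that the interference constraints of $\LPmin$ (Eq.~\ref{const:nes2}) are \emph{identical} to those of $\LPmax$ (Eq.~\ref{eq:int}). The only genuinely new ingredient is tracking the per-commodity demand ratio $|f_i|/b_i$ under scaling by $\lambda$.

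First I would set $g\eqdf f/\lambda$, where $\lambda\geq 1$ is the constant furnished by Lemma~\ref{lemma:relax}. Because the constraint in Eq.~\ref{const:nes2} coincides with the constraint in Eq.~\ref{eq:int}, the reasoning in the proof of Lemma~\ref{lemma:relax} transfers without any change: invoking Theorems~\ref{thm:in} and~\ref{thm:out} bounds the symmetric affectance $\sum_{\{e'\in L_t : d_{e'}\geq d_e\}}(\bar a_{e'}(e)+\bar a_e(e'))$ in each \sinr-feasible slot $L_t$ by $O(1)$; averaging over $t$ and using $f(e')\leq \frac{1}{T}\,|\{t: e'\in L_t\}|$ then shows that $g=f/\lambda$ satisfies Eq.~\ref{const:nes2}. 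So $g$ obeys the second constraint group of $\LPmin$.

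It remains to check $g\in\FF_{\rho/\lambda}$, which splits into membership in $\FF$ and the ratio bound. Membership in $\FF$ follows exactly as in Lemma~\ref{lemma:relax}: $g(e)=f(e)/\lambda\leq 1$, and since $f\in\FF$ gives $|f_i|\leq b_i$ together with $\lambda\geq 1$, we also get $|g_i|=|f_i|/\lambda\leq b_i$. For the ratio, $\rho\eqdf\min_{i}|f_i|/b_i$ yields $|f_i|/b_i\geq\rho$ for every $i$, whence $|g_i|/b_i=|f_i|/(\lambda b_i)\geq\rho/\lambda$. Thus $g\in\FF_{\rho/\lambda}$, so the pair $(g,\rho/\lambda)$ is feasible for $\LPmin$, and by optimality of $R^*$ we conclude $R^*\geq\rho/\lambda$.

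I expect no real obstacle: the substantive work—the $O(1)$ affectance bound obtained via the guards lemma (Lemma~\ref{lem:guards}) and Theorems~\ref{thm:in},~\ref{thm:out}—was already carried out for Lemma~\ref{lemma:relax}, and the $\FF_\rho$ part is a one-line scaling argument. The only point deserving mild care is confirming that $g$ lands inside $\FF_{\rho/\lambda}$ and not merely that it attains the ratio bound: one must note that scaling a feasible flow down by $\lambda\geq 1$ preserves both the unit capacities and the upper demand bounds $|g_i|\leq b_i$, so that the constraint $\FF_{\rho/\lambda}\subseteq\FF$ is respected.
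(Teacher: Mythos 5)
Your proposal is correct and is exactly the ``analogous'' argument the paper intends: the paper gives no explicit proof for this lemma, merely remarking that it follows analogously to Lemma~\ref{lemma:relax}, and your write-up fills in precisely that template --- reusing the identical interference constraints and adding the one-line check that $f/\lambda$ lands in $\FF_{\rho/\lambda}$. No discrepancy with the paper's approach.
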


\noindent
The following proposition gives a lower bound on the optimal throughput.
\begin{proposition} \label{prop:thh1}
    $F^* \geq \frac 1n$ and $R^*\geq \frac{1}{n^2k}$.
\end{proposition}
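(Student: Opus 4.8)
The plan is to prove each of the two bounds in Proposition~\ref{prop:thh1} by exhibiting an explicit feasible solution of the relevant linear program and reading off its objective value. Since $F^*$ and $R^*$ are the optima of $\LPmax$ and $\LPmin$ respectively, any flow that satisfies all the constraints is a certificate lower bound. For $F^*$ a single-commodity single-path flow suffices; for $R^*$ we route every commodity simultaneously along one path each. In both cases the only nontrivial work is verifying the symmetric interference constraints (Constraints~\ref{eq:int} and~\ref{const:nes2}), and this is where the magnitude of the routed flow gets pinned down.

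\textbf{The bound $F^*\geq 1/n$.} Fix one request, say $R_1$, and use the routability assumption to fix a simple path $p$ from $\hat s_1$ to $\hat t_1$. As $p$ visits distinct nodes, it has $|p|\leq n-1$ links. Consider the flow that sends $\phi$ units along $p$: set $f_1(e)=\phi$ for $e\in p$ and $0$ otherwise, with all other commodities zero, so $f(e)=\phi$ on the links of $p$ and $f(e)=0$ elsewhere. I first check $f\in\FF$ (Constraint~\ref{const:ff}): the capacity bound is $f(e)=\phi\leq 1$, and the demand bound $|f_1|=\phi\leq b_1$ holds since $\phi\leq 1\leq b_1$. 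For the interference constraint, observe that the only flow-carrying links are the $\leq n-1$ links of $p$, each with flow exactly $\phi$, and every coefficient satisfies $\bar a_{e'}(e),\bar a_e(e')\leq 1$. Hence the left-hand side of Constraint~\ref{eq:int} at any link $e$ is bounded by
\[
 f(e)+\sum_{\{e'\in B_i:\, d_{e'}\geq d_e\}}(\bar a_{e'}(e)+\bar a_e(e'))\,f(e')\;\leq\;\phi\cdot\bigl(1+2(|p|-1)\bigr)\;\leq\;\phi\cdot(2n-3).
\]
Choosing $\phi$ as large as the binding constraint allows gives a feasible flow, and therefore $F^*\geq |f|=\phi=\Omega(1/n)$, which is the claimed $1/n$ (up to the constant extracted from the coefficient bound).

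\textbf{The bound $R^*\geq 1/(n^2k)$.} Route all $k$ commodities at once: for each $i$ fix a simple path $p_i$ (again $|p_i|\leq n-1$) from $\hat s_i$ to $\hat t_i$ and send the \emph{same} amount $\phi$ along it, so $|f_i|=\phi$ and $f(e)=\phi\cdot|\{i:e\in p_i\}|\leq k\phi$. Then $\rho=\min_i |f_i|/b_i=\phi/\max_i b_i\geq \phi/n$, using the input assumption $b_i\leq n$; this simultaneously certifies $f\in\FF_\rho$ (Constraint~\ref{const:ffrho}). For the interference constraint~\ref{const:nes2}, the total number of commodity-on-link flow contributions is $\sum_i |p_i|\leq k(n-1)<kn$, each carrying flow $\phi$ and coefficient at most $2$, so the left-hand side at every link is at most $2kn\,\phi$. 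Taking $\phi=\Theta(1/(kn))$ makes $f$ feasible, whence $R^*\geq \rho\geq \phi/n=\Omega(1/(n^2k))$, matching the stated bound.

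\textbf{Main obstacle.} The only real step in both parts is the verification of the symmetric interference constraint, and specifically the bound on the coefficient sum $\sum_{e'}(\bar a_{e'}(e)+\bar a_e(e'))$ over the flow-carrying links. The crude estimate — each summand is at most $1$ and at most $\Theta(n)$ (resp.\ $\Theta(kn)$) links carry flow — already yields the correct order $\Theta(1/n)$ and $\Theta(1/(n^2k))$; it is exactly this counting, together with the bound $|p|\leq n-1$ on the number of links in a simple path, that fixes the constants so that the chosen flow value satisfies the constraint and the stated inequalities $F^*\geq 1/n$ and $R^*\geq 1/(n^2k)$ follow. Note that we deliberately argue through LP feasibility directly rather than through Lemma~\ref{lemma:relax}, so as to avoid the extra constant factor $\lambda$ that the reduction from a supporting schedule would introduce.
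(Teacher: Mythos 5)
Your argument is correct in substance but takes a genuinely different route from the paper's. The paper does not verify LP-feasibility directly: it exhibits a round-robin \sinr-schedule over the links of a single simple path $p_i$ (each singleton $\{e\}$ is \sinr-feasible by Assumption~\ref{assume:gamma}), notes that this schedule supports a flow of $1/|p_i|\geq 1/n$, and for the second bound concatenates $k$ such schedules so that each commodity gets $f_i=1/(nk)$, after which $b_i\leq n$ gives $|f_i|/b_i\geq 1/(n^2k)$. You instead plug an explicit path flow into $\LPmax$ and $\LPmin$ and check Constraints~\ref{eq:int} and~\ref{const:nes2} with the crude bound $\bar a_{e'}(e),\bar a_e(e')\leq 1$. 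Your route is arguably the more honest one, because the paper's step ``this implies $F^*\geq 1/n$'' silently passes from a flow supported by a schedule to an LP-feasible flow, which is exactly the content of Lemma~\ref{lemma:relax} and costs the factor $\lambda$. The price is the same in both cases: neither argument literally delivers the constant $1$ in $F^*\geq 1/n$ --- your certificate gives $F^*\geq 1/(2n)$ up to an additive tweak (and $R^*=\Omega(1/(n^2k))$), the paper's gives $F^*\geq 1/(\lambda n)$. You flag this explicitly; the paper does not. The slack is harmless downstream, since Proposition~\ref{prop:miniscale} and Theorem~\ref{thm:log} only need $F^*=\Omega(1/n)$ once the peeling threshold $1/(2nm)$ in step~\ref{item:alg15} is rescaled by the same constant. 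One small point worth stating in both proofs: the constructed flow must also satisfy $|f_i|\leq b_i$ for membership in $\FF$ (and $\FF_\rho$), so the routed amount has to be capped at $b_i$; the paper shares this unstated assumption that the demands are not smaller than the routed quantities.
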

\begin{proof}
  Without loss of generality, the source and destination of each
  request are connected.  Pick a request $R_i$ and a path $p_i$ from
  ${\hat s}_i$ to ${\hat t}_i$.  Consider the schedule that schedules the links of
  $p_i$ in a round-robin fashion. Clearly, this schedule supports a
  flow $f=1/|p|$ from ${\hat s}_i$ to ${\hat t}_i$ along $p$, where $|p|$ denotes
  the length of $p$. This implies that $F^*\geq 1/n$, as required.
  The second part of the proposition is proved by concatenating $k$
  schedules, one schedule per request. The concatenated schedule
  supports a flow $f=(f_1,\ldots,f_k)$, where $f_i=1/(nk)$ along the
  path $p_i$. Since $b_i\leq n$, it follows that $|f_i|/b_i \geq
  1/(n^2k)$, and the proposition follows.
\end{proof}

\begin{proposition}\label{prop:miniscale}
$|{\hat f}| \geq\ F^*/2$
\end{proposition}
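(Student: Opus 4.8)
The plan is to show that the peeling procedure of Section~\ref{sec:remove} destroys only a negligible amount of flow value, and then to compare this loss against the lower bound $F^* \geq 1/n$ from Proposition~\ref{prop:thh1}. Recall that $\hat f$ is obtained from the optimal LP solution $f^*$ (so that $|\hat f|$ starts at $F^*$) by repeatedly choosing an edge $e$ with current flow below $1/(2nm)$ and removing all flow paths through $e$ until $\hat f(e)=0$. The quantity I want to track is the total decrease in the objective $|\hat f|=\sum_i |\hat f_i|$ over the course of the procedure.

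First I would bound the value lost in a single zeroing step. When the current edge $e$ is zeroed, the flow of commodity $i$ through $e$ is $\hat f_i(e)$, and a path/cycle decomposition of $\hat f_i$ restricted to the flow through $e$ consists of $\hat s_i$--$\hat t_i$ paths together with cycles. Removing these units zeroes $\hat f_i(e)$; since cycles carry no source-to-sink value, the reduction in $|\hat f_i|$ is at most the flow of commodity $i$ on those paths, which is at most $\hat f_i(e)$. Summing over commodities, a single step decreases $|\hat f|$ by at most $\sum_i \hat f_i(e)=\hat f(e) < 1/(2nm)$. Next I would bound the number of steps: removing flow only decreases the flow on every other edge, so once an edge reaches zero it stays at zero, and hence each of the $m$ edges is zeroed at most once. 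Therefore the procedure performs at most $m$ zeroing steps, and the total value lost is at most $m\cdot\frac{1}{2nm}=\frac{1}{2n}$.

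Combining these two estimates gives $|\hat f|\geq |f^*|-\frac{1}{2n}=F^*-\frac{1}{2n}$. By Proposition~\ref{prop:thh1}, $F^*\geq \frac1n$, so $\frac{1}{2n}\leq \frac{F^*}{2}$, and thus $|\hat f|\geq F^*-\frac{F^*}{2}=\frac{F^*}{2}$, as required. The main obstacle I anticipate is the per-step bound: one must argue carefully via flow decomposition that eliminating the flow across a low-value edge reduces the multi-commodity objective $|\hat f|$ by no more than the flow $\hat f(e)$ across that edge (properly accounting for cycles and for the fact that several commodities may share $e$), and one must confirm the monotonicity that guarantees at most $m$ peeling steps so that the cumulative loss telescopes to $1/(2n)$.
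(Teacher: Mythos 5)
Your proof is correct and follows essentially the same route as the paper's: bound the flow value destroyed when an edge is zeroed by its current flow $\hat f(e) < 1/(2nm)$, note that at most $m$ edges are ever zeroed so the total loss is at most $1/(2n)$, and compare against the bound $F^*\geq 1/n$ from Proposition~\ref{prop:thh1}. The only difference is that you spell out the flow-decomposition and monotonicity details that the paper's two-line argument leaves implicit.
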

\begin{proof}
  Let us denote by $g$ the total flow that was removed in step~\ref{item:alg15}.
  The contribution to the flow amount $|g|$ due to edges with small flow is less
  than $1/(2nm)$.
  Since there are $m$
  edges, it follows that $|g| \leq 1/(2n)$.  By Prop.~\ref{prop:thh1} we have $F^* \geq
  \frac 1n$, and the proposition follows.
\end{proof}

\noindent
For the case of $\LPmin$, one can show a similar result, that is $|{\hat f}| \geq\ R^*/2$.

\begin{proposition}\label{prop:T}
  If  $~T\geq 2nm$, then the greedy multi-coloring algorithm computes a
  multi-coloring $\pi$ that induces a schedule that supports ${\hat f}/2$.
\end{proposition}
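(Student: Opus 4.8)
The plan is to invoke the Greedy Coloring Lemma (Lemma~\ref{lemma:color}) with node demands $x(e)\eqdf \hat f(e)$, and to show that the resulting multi-coloring $\pi$, interpreted as the schedule $L_t\eqdf \pi^{-1}(t)$, satisfies the support inequality $T\cdot(\hat f(e)/2)\leq |\pi(e)|$ for every link $e$.

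First I would verify that $\hat f$ is a legitimate input for the lemma, i.e., that it satisfies the assumption in Eq.~\ref{eq:color}. This holds because $\hat f$ is obtained from the LP-optimal $f^*$ only by zeroing out minuscule flow paths, so $\hat f(e)\leq f^*(e)$ for every $e$; since all interference coefficients $\bar a_{e'}(e)+\bar a_e(e')$ are non-negative, the left-hand side of the LP constraint Eq.~\ref{eq:int} can only decrease when $f^*$ is replaced by $\hat f$, so $\hat f$ still satisfies Eq.~\ref{eq:int}, which is exactly the hypothesis Eq.~\ref{eq:color} required by the coloring lemma. In particular the greedy procedure does not return ``FAIL''.

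Given this, Lemma~\ref{lemma:color} yields a multi-coloring $\pi$ with $|\pi(e)|\geq \lfloor \hat f(e)\cdot T\rfloor$ for every $e$. Since the induced schedule activates link $e$ in exactly $|\pi(e)|$ of the slots, the definition of a schedule supporting a flow reduces the claim to proving $T\cdot \hat f(e)/2 \leq \lfloor \hat f(e)\cdot T\rfloor$ for every link $e$. For links with $\hat f(e)=0$ both sides vanish. For links carrying positive flow, the key point is that the removal step in Section~\ref{sec:remove} guarantees $\hat f(e)\geq 1/(2nm)$, so that $y\eqdf \hat f(e)\cdot T\geq T/(2nm)\geq 1$ using the hypothesis $T\geq 2nm$. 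The statement then reduces to the elementary floor inequality $\lfloor y\rfloor \geq y/2$ for all $y\geq 1$, which I would verify by splitting into the ranges $1\leq y\leq 2$ (where $\lfloor y\rfloor\geq 1\geq y/2$) and $y>2$ (where $\lfloor y\rfloor > y-1 > y/2$).

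The proof has no deep obstacle; the only point to get right is the matching between the threshold $1/(2nm)$ used when removing minuscule flow paths and the period $T\geq 2nm$. This matching is exactly what forces $\hat f(e)\cdot T\geq 1$ on every edge with positive flow, so that rounding $\hat f(e)\cdot T$ down to an integer number of colors costs at most a factor of two. The one bookkeeping step I would be careful about is confirming that the induced sequence $\{\pi^{-1}(t)\}_{t=0}^{T-1}$ counts each link $e$ in exactly $|\pi(e)|$ slots, so that $|\{t:e\in L_t\}|=|\pi(e)|$ and the support definition applies verbatim.
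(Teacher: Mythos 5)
Your proposal is correct and follows essentially the same route as the paper: invoke Lemma~\ref{lemma:color} to get $|\pi(e)|\geq\lfloor \hat f(e)\cdot T\rfloor$, then use the fact that the removal step forces $\hat f(e)\cdot T\geq 1$ on edges with positive flow so that the floor loses at most a factor of two. Your additional check that $\hat f$ still satisfies Eq.~\ref{eq:color} (because $\hat f\leq f^*$ pointwise and the coefficients are non-negative) is a detail the paper states without proof, but the argument is the same.
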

\begin{proof}
   Recall that a schedule $S=\{L_t\}_{t=0}^{T-1}$ induced by a multi-coloring 
   $\pi:\LL \rightarrow 2^{\{0,\ldots,T-1\}}$ is defined by 
   $\forall t:~L_t=\pi^{-1}(t)$, where $\pi^{-1}(t)\eqdf\{e: t\in \pi(e)\}$.
  Also recall that a schedule $S$ supports $\hat f$ if 
  $\forall e\in \LL:~~~  T \cdot {\hat f}(e) \leq \left|\{t\in \{0,\ldots,T-1\} : e\in L_t\}\right|$.
  Lemma~\ref{lemma:color} implies that the greedy multi-coloring algorithm 
  (see the listing in Algorithm~\ref{alg:color}) computes multi-coloring 
  $\pi$ such that $\forall e \in \LL : |\pi(e)| \geq \lfloor {\hat f}(e) \cdot T
    \rfloor$.
  Hence, it suffices to prove that $T \cdot \hat{f}(e)/2 \leq \lfloor T \cdot
  \hat{f(e)} \rfloor$, for every edge $e$. 
  Indeed, step~\ref{item:alg15} in the algorithm (see listing in Sec.~\ref{sec:alg}) 
  implies that if $\hat{f}(e)>0$, then $\hat{f}(e)\geq 1/T$. 
  Let us consider the following two cases: (1)~If $\hat{f}(e)\in [1/T,2/T)$, 
  then $T \cdot \hat{f(e)}/2 < 1=\lfloor T \cdot \hat{f(e)} \rfloor$, 
  (2)~if $\hat{f}(e)\geq 2/T$, then $T \cdot \hat{f(e)}/2 \leq T\cdot
  (\hat{f}(e) - 1/T) \leq \lfloor T \cdot \hat{f}(e) \rfloor$, as required.
\end{proof}
\noindent
For the case of $\LPmin$, one can show the same result if $T\geq 2n^2 km$.

\begin{lemma}\label{lemma:disperse}
  If $L\subseteq \LL$ satisfies Eq.~\ref{eq:color1}, then there exists
  a subset $J\subseteq L$ such that:
  \begin{inparaenum}[(i)]
  \item $J$ is a $\overline{1/3}$-\barsignal, and
  \item $|J|\geq |L|/2$.
  \end{inparaenum}
\end{lemma}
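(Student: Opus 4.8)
The plan is to exhibit $J$ directly as the set of links whose ``interference received from strictly shorter links'' is small, and to show that at most half of $L$ is discarded. Recall that $J$ being a $\overline{1/3}$-\barsignal\ means $\bar a_J(e) \le 3$ for every $e \in J$. The natural first step is to split, for a candidate $e$, the affectance $\bar a_J(e)$ according to relative length:
\begin{align*}
\bar a_J(e) = \sum_{\{e' \in J : d_{e'} \ge d_e,\, e' \neq e\}} \bar a_{e'}(e) \;+\; \sum_{\{e' \in J : d_{e'} < d_e\}} \bar a_{e'}(e).
\end{align*}
The first (``from longer'') sum is harmless: since $J \subseteq L$ and all terms are nonnegative, dropping the $\bar a_e(e')$ summands from Eq.~\ref{eq:color1} shows this sum is at most $1$ for every $e \in L$, irrespective of which subset $J$ we end up choosing. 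Hence the entire difficulty is to control the second (``from shorter'') sum, which Eq.~\ref{eq:color1}, applied at $e$, says nothing about.

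The crux is to convert this uncontrolled ``incoming from shorter'' quantity into an ``outgoing to longer'' quantity, which Eq.~\ref{eq:color1} does bound --- this is exactly what the symmetry of the constraint buys us. Set $X(e) \eqdf \sum_{\{e' \in L : d_{e'} < d_e\}} \bar a_{e'}(e)$. I would sum $X$ over all of $L$ and swap the order of summation, rewriting it as $\sum_{e'} \sum_{\{e : d_e > d_{e'}\}} \bar a_{e'}(e)$. For a fixed $e'$, the inner sum is over links strictly longer than $e'$, so it is dominated by the ``outgoing'' half $\sum_{\{e : d_e \ge d_{e'},\, e \neq e'\}} \bar a_{e'}(e)$ of Eq.~\ref{eq:color1} applied at $e'$, which is at most $1$. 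Therefore $\sum_{e \in L} X(e) \le |L|$. A Markov/averaging argument then finishes it: letting $J \eqdf \{e \in L : X(e) \le 2\}$, the links with $X(e) > 2$ each contribute more than $2$ to a total of at most $|L|$, so fewer than $|L|/2$ of them exist, giving $|J| \ge |L|/2$.

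It remains only to verify the signal bound on this $J$. For $e \in J$, the ``from shorter'' sum restricted to $J$ is at most $X(e) \le 2$ (passing to $J \subseteq L$ only removes nonnegative terms), and the ``from longer'' sum is at most $1$ as above, so $\bar a_J(e) \le 3$, i.e.\ $J$ is a $\overline{1/3}$-\barsignal. The main conceptual obstacle is the middle step --- realizing that the interference inflicted by shorter links, which the greedy-coloring constraint cannot bound directly, becomes summable precisely because the constraint is symmetric and the same term $\bar a_{e'}(e)$ reappears as an outgoing contribution in the shorter link $e'$'s own constraint; everything else is a one-line Markov inequality. The only minor technical point is ties in link length, which I would dispose of by fixing a total order refining $d_e$ so that the ``$>$'' and ``$\ge$'' splits above are clean; note that ``$d_e > d_{e'}$'' always implies ``$d_e \ge d_{e'}$ and $e \neq e'$'', so the domination step survives ties without change.
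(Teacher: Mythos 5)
Your proof is correct and is essentially the paper's own argument: the paper packages the same double-counting into a symmetric matrix $A(e,e')=\bar a_e(e')+\bar a_{e'}(e)$, bounds the column sums of its upper-triangular part by Eq.~\ref{eq:color1}, and applies Markov to the row sums, which is exactly your swap-of-summation plus averaging on $X(e)$. The only cosmetic difference is that you track just the incoming terms $\bar a_{e'}(e)$ from shorter links rather than the full symmetric entry, which changes nothing in the bound $\bar a_J(e)\le 1+2=3$.
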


\begin{proof}
  Define a square matrix $A$, the rows and columns of which are
  indexed by $L$ as follows: order $L$ in descending length order, so that $e'$
  precedes $e$ if $d_{e'}> d_e$.  Let $A({e,e'}) \eqdf (\bar
  a_e(e')+\bar a_{e'}(e))$ and $A(e,e)=0$. Note that $A$ is symmetric.

  Let $\At$ denote the upper right triangular submatrix of $A$.
  Eq.~\ref{eq:color1} implies that, $$\sum_{\{e': d_{e'} \geq d_e\}}
  A(e',e) \leq 1.$$ Hence, the weight of every column in $\At$ is
  bounded by $1$. This implies that the sum of the entries in $\At$ is
  bounded by $|L|$.  By Markov's Inequality, at most half the rows in
  $\At$ have weight greater than $2$.  Let $J\subseteq L$ denote the
  indexes of the rows in $\At$ whose weight is at most $2$.
Clearly, $|J|\geq |L|/2$.

We claim that, for every $e\in J$, the weight of the column $A^e$ is
at most $3$. Indeed, $\sum_{\{e' :d_{e'}\geq d_e\}} A(e',e) \leq 1$. In
addition, $\sum_{\{e' :d_{e'}< d_e\}} A(e',e)= \sum_{\{e' :d_{e'}< d_e\}}
A(e,e')\leq 2$ since this is the sum of the row indexed $e$ in $\At$.
This implies that ${\bar a}_J(e)\leq 3$, for every $e\in J$, and the lemma follows.
\end{proof}

\begin{proposition}\label{prop:disperse}
  The dispersion procedure partitions every color class $\pi^{-1}(t)$
  into $O(\log m)$ \sinr-feasible sets.
\end{proposition}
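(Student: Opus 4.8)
The plan is to combine the two phases of the dispersion procedure, bounding the number of output sets produced by each phase and showing their product is $O(\log m)$. Since the input color class $\pi^{-1}(t)$ satisfies Eq.~\ref{eq:color1} (this is exactly the property guaranteed by the greedy coloring, per Constraint~\ref{item:color1} of Lemma~\ref{lemma:color}), I can invoke the machinery already developed.

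\textbf{First phase.} I would argue that Algorithm~\ref{alg:disperse3} terminates after $O(\log m)$ iterations, each producing one $\overline{1/3}$-\barsignal\ set. The key is Lemma~\ref{lemma:disperse}: at every iteration, the current set $L^i$ still satisfies Eq.~\ref{eq:color1} (since Eq.~\ref{eq:color1} is a ``downward closed'' property — removing links only decreases the left-hand sums, so any subset of a set satisfying it also satisfies it). Hence Lemma~\ref{lemma:disperse} applies to $L^i$ and guarantees a $\overline{1/3}$-\barsignal\ subset $J^i$ with $|J^i|\geq |L^i|/2$. Therefore $|L^{i+1}| = |L^i| - |J^i| \leq |L^i|/2$, so the number of iterations is at most $\lceil \log_2 |L| \rceil = O(\log m)$, since $|L| \leq m$. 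This bounds the first phase by $O(\log m)$ sets.

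\textbf{Second phase.} Here I would show that each $\overline{1/3}$-\barsignal\ set $L_i$ is further partitioned into a \emph{constant} number of $\overline{7/6}$-\barsignal\ (hence \sinr-feasible) sets by the two first-fit bin-packing procedures described after Algorithm~\ref{alg:disperse3}. Following~\cite[Thm 1]{HW}, the argument is that starting from a $p$-\barsignal\ set, one can split into a bounded number of $p'$-\barsignal\ sets for a larger target $p'$: the first bin-packing into $7$ bins controls the affectance from one direction (longer links, say) keeping each link's partial affectance at most $3/7$, and the second pass partitioning each bin into $7$ sub-bins and scanning in reverse order controls the affectance from the other direction. A standard first-fit accounting (a link that fails to fit in the first $6$ bins would contradict the $\overline{1/3}$-\barsignal\ bound) shows that all links are placed, yielding at most $7 \cdot 7 = 49 = O(1)$ final sets per input set, each a $\overline{7/6}$-\barsignal.

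\textbf{Conclusion.} Multiplying the two bounds, the total number of \sinr-feasible sets produced from $\pi^{-1}(t)$ is $O(\log m) \cdot O(1) = O(\log m)$, as claimed. The main obstacle I anticipate is the second-phase analysis: verifying that the two-pass first-fit bin packing genuinely terminates with every link placed and that the resulting target signal strength is exactly $\overline{7/6}$ (so that \sinr-feasibility is attained via the $\overline{(1+\eps)}$-\barsignal\ criterion). This requires carefully matching the numeric constants ($1/3$ input, $3/7$ per-bin threshold, $7$ bins) so that the first-fit invariant cannot fail, which is where the symmetry of the weights $w(e,e') = \bar a_e(e') + \bar a_{e'}(e)$ is essential — it lets the two directional passes each handle one ``half'' of the interference. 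The first-phase count is comparatively routine once the downward-closure of Eq.~\ref{eq:color1} and Lemma~\ref{lemma:disperse} are in hand.
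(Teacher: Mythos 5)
Your proposal is correct and follows essentially the same two-phase argument as the paper: Lemma~\ref{lemma:disperse} gives the halving $|L^{i+1}|\leq |L^i|/2$ and hence $O(\log m)$ sets in the first phase, and the second phase contributes only a constant factor ($49$ subsets per set, following~\cite[Thm 1]{HW}). Your explicit remark that Eq.~\ref{eq:color1} is downward closed under taking subsets --- so that Lemma~\ref{lemma:disperse} indeed applies to every $L^i$, not just $L^0$ --- is a detail the paper leaves implicit, and is a welcome addition.
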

\begin{proof}
  Recall that the dispersion procedure \disperse\ consists of two phases.
  The first phase is the $\frac{1}{3}$-\disperse$(\pi^{-1}(t))$ algorithm (see the listing in Algorithm~\ref{alg:disperse3}), 
  and the second phase is implemented by two first-fit packing procedures.
  
  Let us consider the first phase. Note that $L^0 = \pi^{-1}(t)$.
  Since $|L^{i+1}| \leq |L^i|/2$, then  $\frac{1}{3}$-\disperse$(\pi^{-1}(t))$ 
  requires at most $\log_2 |\pi^{-1}(t)|$ iterations. 
  Hence, it partitions $\pi^{-1}(t)$ into at most $\log_2 |\pi^{-1}(t)|$ sets, where each set is a
  $\overline{1/3}$-\barsignal\ set.  
  
  Now, in the second phase each of these sets is partitioned into $49$ subsets.
  The lemma follows.
\end{proof}

\begin{theorem}\label{thm:log}
  If Assumption~\ref{assume:gamma} holds, and all the links are in the same bucket,
  then there exists an $O(\log n)$-approximation algorithm for the \MAX\ and the \MAXMIN\ problems.
\end{theorem}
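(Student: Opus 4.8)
The plan is to assemble the propositions and lemmas of Section~\ref{sec:analysis} into a single pipeline bound, tracking how much of the flow survives each stage of the algorithm of Section~\ref{sec:alg}. First I would fix the upper bound. Applying Lemma~\ref{lemma:relax} to an optimal \sinr-feasible schedule that supports an optimal flow of value $\opt$ yields $F^* \ge \opt/\lambda$, i.e.\ $\opt \le \lambda\cdot F^*$, for the absolute constant $\lambda \ge 1$. Hence it suffices to show that the algorithm returns a feasible solution of value $\Omega(F^*/\log n)$. Feasibility of the output is immediate from Step~\ref{item:alg3}: after dispersion every set $L_{t,i}$ is a $\overline{7/6}$-\barsignal\ and is therefore \sinr-feasible, so $\{L_{t,i}\}$ is an \sinr-schedule and, by construction, supports the returned flow $f$.

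Next I would push the flow through Steps~\ref{item:alg15}--\ref{item:alg4}. Peeling off minuscule paths costs only a factor of two, $|\hat f| \ge F^*/2$ by Proposition~\ref{prop:miniscale}. With $T = 2nm$, Proposition~\ref{prop:T} produces a multi-coloring $\pi$ whose induced schedule $\{L_t\}_{t=0}^{T-1}$ supports $\hat f/2$. Finally Proposition~\ref{prop:disperse} partitions each color class $\pi^{-1}(t)$ into $\ell(t) = O(\log m)$ \sinr-feasible subsets; I set $\ell \eqdf \max_t \ell(t) = O(\log m)$.

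The one quantitative point that needs care—and the part I expect to be the main (if modest) obstacle—is the renormalization at the dispersion stage. The subsets $\{L_{t,i}\}_i$ partition $L_t$, so each link $e$ occupies exactly the same total number of slots after dispersion as before, namely $|\{t : e\in L_t\}| \ge T\cdot \hat f(e)/2$, while the new period $T' = \sum_t \ell(t) \le \ell\cdot T$ grows by at most a factor $\ell$. Since the dispersed schedule supports $g$ whenever $T'\cdot g(e) \le |\{t : e\in L_t\}|$, it in particular supports $\hat f/(2\ell)$, which is precisely the flow $f=\hat f/(2\ell)$ returned in Step~\ref{item:alg4}. Combining the three factors gives
\[
  |f| \;=\; \frac{|\hat f|}{2\ell} \;\ge\; \frac{F^*}{4\ell} \;\ge\; \frac{\opt}{4\lambda\ell}.
\]

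To finish I would note that $\lambda$ is constant and $\ell = O(\log m) = O(\log n)$, using $m = \poly(n)$ (link endpoints lie in $V$, so $m=O(n^2)$ absent parallel links), whence the approximation ratio for \MAX\ is $4\lambda\ell = O(\log n)$. For \MAXMIN\ I would repeat the identical chain with the substitutions of the remark after the algorithm: solve $\LPmin$, peel paths below $1/(2n^2km)$, and take $T = 2n^2km$. The $R^*$-analogues of Propositions~\ref{prop:miniscale} and~\ref{prop:T} give $|\hat f|\ge R^*/2$ and a schedule supporting $\hat f/2$, the min-throughput analogue of Lemma~\ref{lemma:relax} gives $\opt \le \lambda R^*$, and the same dispersion bound applies; since $\log(2n^2km) = O(\log n)$ whenever $k,m = \poly(n)$, the dispersion factor remains $O(\log n)$, yielding an $O(\log n)$-approximation for the min-fraction objective as well.
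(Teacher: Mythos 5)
Your proof is correct and follows essentially the same pipeline as the paper's: Lemma~\ref{lemma:relax} for the upper bound, Propositions~\ref{prop:miniscale}, \ref{prop:T}, and \ref{prop:disperse} for the three loss factors, and $\log m = O(\log n)$ in the absence of parallel links. Your explicit accounting of the period renormalization after dispersion (via $T' \le \ell \cdot T$ with $\ell = \max_t \ell(t)$) is a welcome elaboration of a step the paper compresses into ``the dispersion procedure reduces the throughput by a factor of $O(\log m)$,'' but it is the same argument.
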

\begin{proof}
    Let $\opt$ denote the maximum total throughput.  By
  Lemma~\ref{lemma:relax}, $F^*\geq \opt/\lambda= \Omega(\opt)$.  Recall that
  $f^*$ denotes an optimal solution of $\LPmax$.  By
  Prop.~\ref{prop:miniscale} $|\hat{f}| \geq |f^*|/2$, and by
  Prop.~\ref{prop:T}, the multi-coloring $\pi$ supports $\hat{f}/2$.
  By Prop.~\ref{prop:disperse}, the dispersion procedure reduces the
  throughput by a factor of $O(\log m)$.  Since there are no parallel
  edges, $\log m = O(\log n)$.  Thus, the final throughput is
  $|\hat{f}|/O(\log n) = \opt/O(\log n)$, and the theorem follows.
\end{proof}

Since in the linear power assignment all links receive with same power, all the links are in the same bucket. We conclude with the following result for the linear power assignment.
\begin{coro}
  If Assumption~\ref{assume:gamma} holds, then there exists an $O(\log n)$-approximation algorithm for the \MAX\ and the \MAXMIN\ problems in the linear power assignment.
\end{coro}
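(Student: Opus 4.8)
The plan is to observe that the linear power assignment is precisely the degenerate case of the bucketing scheme in which a single bucket suffices, and then to invoke Theorem~\ref{thm:log} without modification. So this corollary should require no new machinery: all the work has already been done in the construction and analysis of the single-bucket algorithm (the LP relaxation of Lemma~\ref{lemma:relax}, the removal of minuscule flow paths, the greedy multi-coloring of Lemma~\ref{lemma:color}, and the dispersion of Proposition~\ref{prop:disperse}), and it only remains to certify that linear powers land us in the single-bucket regime.

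Concretely, I would first recall the definition of the linear power assignment: all links receive with the same power, i.e.\ $S_e = S_{e'}$ for every pair of links $e,e' \in \LL$. Consequently $S_{\min} = \min_{e} S_e = \max_{e} S_e$, so for every link $e$ we have $S_e / S_{\min} = 1$ and hence $i(e) = \lfloor \log_2 (S_e/S_{\min}) \rfloor = 0$. This places every link in the single bucket $B_0$, so that the signal diversity is $\sigdiv = 1$ (consistent with Lemma~\ref{lem:dive}, which for $\Gamma = \Delta = 1$ in the reception-power sense also gives a single nonempty bucket). In particular the hypothesis ``all the links are in the same bucket'' of Theorem~\ref{thm:log} is met.

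With that established, the corollary follows immediately: under Assumption~\ref{assume:gamma}, which is the remaining hypothesis of Theorem~\ref{thm:log}, that theorem directly supplies an $O(\log n)$-approximation algorithm for both the \MAX\ and the \MAXMIN\ problems, and since all links of a linear-power instance reside in one bucket the same algorithm and guarantee apply verbatim. There is no genuine obstacle here; the only point to verify is the elementary bucket computation above, and the entire quantitative content is inherited from Theorem~\ref{thm:log} specialized to $\sigdiv = 1$.
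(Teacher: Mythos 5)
Your proposal is correct and matches the paper's own argument exactly: the paper likewise observes that under the linear power assignment all links receive with the same power, hence lie in a single bucket, and then invokes Theorem~\ref{thm:log} directly. Your additional verification that $i(e)=0$ for every link is a slightly more explicit spelling-out of the same one-line observation.
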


\section{Given Arbitrary Transmission Powers}\label{sec:arbitraty}
In this section we show how to apply the algorithm presented in
Section~\ref{sec:alg} to the case in which transmission power $P_e$ of
each link $e$ is part of the input. Note that $P_e$ may be arbitrary.

\begin{theorem}\label{thm:given}
  If Assumption~\ref{assume:gamma} holds, then there exists an $O(\log
  n \cdot (\log \Delta + \log \Gamma ))$-approximation algorithm for
  the \MAX\ and the \MAXMIN\ problems when the link transmission
  powers are part of the input.
\end{theorem}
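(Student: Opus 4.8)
The plan is to reduce the arbitrary-power case to the single-bucket case handled in Theorem~\ref{thm:log}, paying a factor of $\sigdiv$ for the number of buckets and invoking Lemma~\ref{lem:dive} to bound $\sigdiv$ by $O(\log\Delta + \log\Gamma)$. The starting point is that with arbitrary input powers the links no longer lie in a single bucket $B_i$; however, the entire machinery of Sections~\ref{sec:necessary}--\ref{sec:analysis} was developed so that the interference constraints and the necessary conditions (Theorems~\ref{thm:in} and~\ref{thm:out}) hold for links \emph{within the same bucket}. So the natural strategy is to solve the problem separately on each bucket and combine.

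First I would solve the linear program $\LPmax$ (or $\LPmin$) once on the full instance. The crucial observation is that the interference constraints in Eq.~\ref{eq:int} are already indexed by bucket: for every $i$ and every $e\in B_i$, only links $e'\in B_i$ with $d_{e'}\ge d_e$ appear in the constraint. Hence the optimal fractional solution $f^*$ decomposes naturally, and Lemma~\ref{lemma:relax} (which is stated per-bucket through Theorems~\ref{thm:in},~\ref{thm:out}) still certifies that $\LPmax$ is a relaxation of \MAX, so $\XTH^*=\Omega(\opt)$. Next, I would partition the flow by bucket: for each nonempty bucket $B_i$, let $f^{*,i}$ be the restriction of $f^*$ to links in $B_i$. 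Since there are $\sigdiv$ nonempty buckets, by an averaging argument at least one bucket $B_{i^*}$ carries flow $|f^{*,i^*}| \ge \XTH^*/\sigdiv$. Restricting attention to that single bucket, all links lie in $B_{i^*}$, so the single-bucket algorithm of Section~\ref{sec:alg} applies verbatim: greedy coloring followed by the dispersion procedure produces an \sinr-feasible schedule supporting an $\Omega(1/\log n)$ fraction of $|f^{*,i^*}|$, by the analysis culminating in Theorem~\ref{thm:log}.

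Putting the two losses together, the delivered throughput is
\[
    \Omega\!\left(\frac{1}{\log n}\cdot |f^{*,i^*}|\right)
    \;=\; \Omega\!\left(\frac{\XTH^*}{\sigdiv\cdot \log n}\right)
    \;=\; \Omega\!\left(\frac{\opt}{\sigdiv\cdot \log n}\right),
\]
which gives an $O(\sigdiv\cdot\log n)$-approximation. Finally, Lemma~\ref{lem:dive} bounds $\sigdiv \le \lceil\alpha\log_2\Delta + \log_2\Gamma\rceil = O(\log\Delta + \log\Gamma)$ since $\alpha$ is a constant, yielding the claimed $O(\log n\cdot(\log\Delta+\log\Gamma))$ ratio. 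The same argument works for \MAXMIN\ by using $\LPmin$, $\FF_\rho$, and the corresponding propositions, with $T$ set to $2n^2km$.

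The step I expect to require the most care is arguing that one may schedule only a single bucket while still charging against the \emph{global} optimum, rather than having to interleave schedules across buckets. The clean resolution is the averaging/pigeonhole step: because the dispersion and coloring guarantees are intra-bucket and the LP relaxation is proved per-bucket, losing a factor of $\sigdiv$ by discarding all but the heaviest bucket is legitimate and loses nothing beyond that factor. One subtlety to verify is that the \sinr-feasibility of the returned schedule is genuinely global: a set of links that is $\overline{(7/6)}$-\barsignal\ using only its own (same-bucket) links is \sinr-feasible with respect to the actual interference, since we are transmitting links from a single bucket and no other links are active, so the affectance bound is exact and no cross-bucket interference arises.
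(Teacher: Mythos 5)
There is a genuine gap in your averaging step. You restrict the optimal LP flow $f^*$ to the links of a single bucket $B_{i^*}$ and call the result $f^{*,i^*}$, but in the multi-hop setting this object is not a multi-commodity flow: a flow path from ${\hat s}_i$ to ${\hat t}_i$ generally traverses links from several buckets, so deleting all links outside $B_{i^*}$ destroys flow conservation at the intermediate nodes, and the quantity $|f^{*,i^*}|$ (the amount delivered from sources to sinks using only $B_{i^*}$) is not something you can lower-bound by $\XTH^*/\sigdiv$ via pigeonhole. The decomposition $|f^*|=\sum_i|f^{*,i}|$ that your argument implicitly relies on simply does not hold; the value $|f^*|$ is measured at the source/sink nodes, not summed over edges. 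This is exactly the point where single-hop reasoning (where picking the heaviest bucket is legitimate) breaks down for \throute.

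The paper avoids this by never discarding any bucket: it solves the LP once, removes minuscule flow paths, runs the coloring and dispersion steps (Items~\ref{item:alg2}--\ref{item:alg4}) \emph{separately on each bucket}, and then \emph{concatenates} the resulting $\sigdiv$ schedules in time. Every link of every flow path still receives its required number of slots; only the period grows by a factor of $\sigdiv$, which is what costs the extra $O(\log\Delta+\log\Gamma)$ factor via Lemma~\ref{lem:dive}. The parts of your write-up that do survive are the observation that the constraints in Eq.~\ref{eq:int} are indexed per bucket (so Lemma~\ref{lemma:relax} extends to the multi-bucket LP by applying Theorems~\ref{thm:in} and~\ref{thm:out} to $L_t\cap B_i$), and the remark that each output set is genuinely \sinr-feasible because only same-bucket links transmit in any one slot. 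To repair your proof, replace the pigeonhole step with the time-multiplexing/concatenation argument.
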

\begin{proof sketch}
    We construct an \sinr-feasible schedule and its supported flow.
    The construction  proceeds as follows: (1)~solve the matching LP, (2)~remove the minuscule flow paths as described in Item~\ref{item:alg15}, (3)~run Items~\ref{item:alg2}-\ref{item:alg4} for every bucket separately, (4)~concatenate the output schedules, to obtain an \sinr-feasible schedule of all the links in $\LL$.
    Step (3) of this construction reduces the flow by a factor of at most  $O(\log n)$.
    Step (4) of this construction reduces the flow by an additional factor of at most the number of nonempty buckets, that is $O(\log \Delta + \log \Gamma )$.
\end{proof sketch}

\section{Limited Powers}\label{sec:selected}\label{sec:limited}
In this section we consider the case in which the algorithm needs to
assign a power $P_e$ to each link. The assigned powers must satisfy
$\pmin \leq P_e \leq \pmax$. To simplify the description, assume that
$\log_2 (\pmax/\pmin)$ is an integer, denoted by $\ell$.

We reduce this problem to the case of given arbitrary powers as
follows.  For each pair of nodes $(u,v)$, define $\ell+1$ parallel links,
where the transmission power of the $i$th copy equals $2^i \cdot
\pmin$.
\begin{theorem}
  Assume that, for every link $e$, $(\pmin/d_e^\alpha)/N \geq
  (1+\eps)\cdot \beta$.  Then, there exists an $O((\log n+\log\log
  \Gamma) \cdot (\log \Delta + \log \Gamma ))$-approximation algorithm
  for the \MAX\ and the \MAXMIN\ problems when the link transmission
  powers are in the range $[\pmin,\pmax]$.
\end{theorem}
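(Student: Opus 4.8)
The plan is to reduce the limited-powers problem to the arbitrary-given-powers case already solved in Theorem~\ref{thm:given}, and to carefully account for the extra loss introduced by the discretization of the power range. The key observation is that the construction already described in the statement replaces each pair of nodes $(u,v)$ by $\ell+1$ parallel links, where the $i$th copy uses power $2^i\cdot\pmin$ and $\ell=\log_2(\pmax/\pmin)=\log_2\Gamma$. Since any power $P\in[\pmin,\pmax]$ lies within a factor of $2$ of some $2^i\cdot\pmin$, the optimal limited-power solution can be approximated, up to a constant factor in affectance, by a solution that uses only these discretized powers. Thus I would first argue that the optimal throughput for the discretized instance is within a constant factor of the optimal throughput for the continuous range $[\pmin,\pmax]$; this is where I would invoke Assumption~\ref{assume:gamma} in the form stated in the theorem, namely $(\pmin/d_e^\alpha)/N\geq(1+\eps)\cdot\beta$, to guarantee that even the weakest link copies satisfy the \snr\ requirement needed by Proposition~\ref{prop:gamma}.

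Next I would apply Theorem~\ref{thm:given} directly to the constructed instance with parallel links, whose powers are now fixed (arbitrary but given). This yields an $O(\log n'\cdot(\log\Delta'+\log\Gamma'))$-approximation, where $n'$, $\Delta'$, and $\Gamma'$ are the node count, distance ratio, and power ratio of the constructed instance. I would then check how each of these parameters relates to the original quantities. The power ratio is unchanged, $\Gamma'=\Gamma$, and the distance ratio is unchanged, $\Delta'=\Delta$, since we only added parallel copies without introducing new node locations. The subtle point is the number of nodes: the construction multiplies the number of links by a factor of $\ell+1=O(\log\Gamma)$, but the approximation factor of Theorem~\ref{thm:given} depends on $\log n$ through $\log m=O(\log n)$ in the dispersion step (Prop.~\ref{prop:disperse}), where $m=|\LL|$ is the number of links. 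Here $m'=O(m\cdot\log\Gamma)$, so $\log m'=O(\log m+\log\log\Gamma)=O(\log n+\log\log\Gamma)$. Substituting this into the bound of Theorem~\ref{thm:given} gives the claimed $O\bigl[(\log n+\log\log\Gamma)\cdot(\log\Delta+\log\Gamma)\bigr]$ factor.

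The main obstacle I expect is not the counting argument but the reduction's correctness in the other direction: one must ensure that a schedule and flow computed on the parallel-link instance can be transformed into a valid power assignment for the original instance in which each original link is assigned exactly one power in $[\pmin,\pmax]$. Concretely, the LP and greedy coloring may route flow through several parallel copies of the same node-pair, effectively using different powers for the ``same'' physical link simultaneously. I would address this by noting that each parallel copy occupies distinct time slots in the final \sinr-feasible schedule (copies are simply treated as independent links), so in any given slot each physical link transmits with a single well-defined power; the flow supported on the physical link $(u,v)$ is then the sum over its copies, and \sinr-feasibility is inherited slot-by-slot from the constructed instance. This sidesteps the apparent difficulty, since we never need a single static power per link---only a per-slot power assignment, which is exactly what a schedule over parallel copies provides.

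Putting these pieces together: the discretization costs a constant factor, Theorem~\ref{thm:given} supplies the main approximation guarantee on the constructed instance, and the only change to the bound comes from replacing $\log m$ by $\log m'=O(\log n+\log\log\Gamma)$. Combining the constant-factor discretization loss with the approximation factor of the reduction yields the theorem. I would present the argument in roughly this order, keeping the per-slot power interpretation explicit so that the output is a genuine solution to the limited-powers problem rather than merely to the artificially expanded instance.
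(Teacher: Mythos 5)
Your proposal follows essentially the same route as the paper: discretize the power range into the $O(\log\Gamma)$ values $2^i\cdot\pmin$ realized as parallel links, apply Theorem~\ref{thm:given} to the resulting given-powers instance, and observe that the only change in the bound is that $\log m$ becomes $\log m' = O(\log n + \log\log\Gamma)$. The one place where your sketch has a hole that the paper fills is the justification of the claim that the discretized instance loses only a constant factor in throughput. Rounding each power up to the nearest discrete value increases affectances by at most a factor of two, but a slot with $a_L(e)\leq 1$ is no longer guaranteed \sinr-feasible after its affectances double; the paper first refines each slot of the optimal continuous-power schedule into a constant number of $2$-signal slots using the signal-strengthening result of~\cite[Thm 1]{HW}, and only then rounds the powers up --- the factor-two slack created by the refinement is exactly what absorbs the factor-two increase from rounding. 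The inference from ``within a constant factor in affectance'' to ``within a constant factor in throughput'' is not valid without this intermediate step, so you should make it explicit. On the other hand, your per-slot interpretation of the parallel copies (ensuring each physical transmitter uses a single well-defined power in any given slot) is a reasonable point of care that the paper's own sketch does not address.
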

\begin{proof sketch}
  Note that the number of links increases by a factor of $O(\log
  \Gamma)$. This implies that the $\log n$ factor increases to $(\log
  n+\log\log \Gamma)$.

  The important observation is that there exists a solution that uses
  the discrete power assignments $2^i\cdot P_e$ and achieves a
  throughput that is a constant fraction of the optimal throughput.
  The theorem follows then from Theorem~\ref{thm:given}.

  The proof of this observation proceeds as follows. Given an optimal
  schedule, refine each time slot so that it is a $p$-signal for
  $p=2$.  This reduces the throughput only by a constant factor (see ~\cite[Thm 1]{HW}).
  Round up each transmission power to the smallest discrete power that satisfies Assumption~\ref{assume:gamma}.
  This increases the affectance by at most a factor of two,
  thus the resulting schedule is \sinr-feasible. Moreover, the
  schedule uses links with powers that satisfy Assumption~\ref{assume:gamma}.
\end{proof sketch}

\subsection*{Acknowledgments}
We thank Nissim Halabi and Moni Shahar for useful conversations.
This project was partially funded by the Israeli ministry of Science and Technology.

\bibliographystyle{alpha}
\bibliography{icalp11}

\appendix
\section{Proofs}\label{sec:proofs}

\paragraph{Proposition~\ref{prop:aff}.}
 $$\forall i ~\forall ~e_1,e_2 \in B_i:~~  \frac 12 \cdot \left(\frac{d_{e_1}}{d_{e_1
          e_2}}\right)^\alpha < \hat a_{e_1}(e_2)  <2 \cdot \left(\frac{d_{e_1}}{d_{e_1
          e_2}}\right)^\alpha \:,$$
    $$\forall ~e_1,e_2 \in \LL:~~ \hat a_{e_1}(e_2) = \left(\frac{d_{e_2}}{d_{e_1
          e_2}}\right)^\alpha \text{ in the uniform power model.}$$

\begin{proof}
    Recall that $\hat a_{e'}(e) \eqdf \frac{S_{e'e}}{S_e}$, $S_e\eqdf P_e/d_e^\alpha$, and
    $S_{e'e}=P_{e'}/{d_{e'e}^{\alpha}}$.
    Note that every two links $e_1,e_2 \in B_i$, satisfy that $S_{e_1} / S_{e_2} \in (1/2,2)$.
    Hence,
    \begin{eqnarray*}
        \hat a_{e_1}(e_2)  & = & \frac{S_{e_1e_2}}{S_{e_2}} = \frac{S_{e_1e_2}}{S_{e_1}} \cdot
                                    \frac{S_{e_1}}{S_{e_2}}\\
                        & = & \frac {P_{e_1}/d_{e_1e_2}^{\alpha}}{P_{e_1}/d_{e_1}^{\alpha}} \cdot
                            \frac{S_{e_1}}{S_{e_2}} \\
                        & = &\left(\frac {d_{e_1}}{{d_{e_1e_2}}}\right)^{\alpha} \cdot \frac{S_{e_1}}{S_{e_2}}\:,
    \end{eqnarray*}
    as required.

    On the other hand, in the uniform power model assignment, all links transmit with the
    same power, namely $P_e = P_{e'}$ for every two links $e$ and $e'$. Hence,
    \begin{eqnarray*}
        \hat a_{e_1}(e_2)  & = & \frac{S_{e_1e_2}}{S_{e_2}} \\
                        & = & \frac {P_{e_1}/d_{e_1e_2}^{\alpha}}{P_{e_2}/d_{e_2}^{\alpha}} \\
                        & = &\left(\frac {d_{e_2}}{{d_{e_1e_2}}}\right)^{\alpha}\:,
    \end{eqnarray*}
    as required.
\end{proof}

\paragraph{Theorem~\ref{thm:out}}
    \textit{Let $L$ denote an SINR-feasible set of links. If $L \subseteq
  B_i$, then
     \[
\forall e\in B_i:~~~           \sum_{\{e' \in L : d_{e'} \geq d_{e}\}}\bar a _{e}(e') = O(1).
     \]}
\begin{proof}
    Theorem 1 in ~\cite{K10} implies that
    $$\sum_{\{e' \in L : d_{e'} \geq d_{e}\}}\min\left\{1,\left(\frac{d_e}{d_{ee'}}\right)^{\alpha}\right\}
        +\sum_{\{e' \in L : d_{e'} \geq d_{e}\}}
    \min \left\{1,\left ( \frac{d_e}{d_{s_{e'}r_e}}\right)^{\alpha}\right\} = O(1).$$
    It follows that,
    \begin{eqnarray*}
        O(1) & = & \sum_{\{e' \in L : d_{e'} \geq d_{e}\}}\min\left\{1,\left(\frac{d_e}{d_{ee'}}\right)^{\alpha}\right\} \\
        & \geq & \sum_{\{e' \in L : d_{e'} \geq d_{e}\}}\min\left\{1,\frac 12 \cdot{\hat a}_{e}(e')\right\} \\
        & = & \sum_{\{e' \in L : d_{e'} \geq d_{e}\}}\min\left\{1,\frac {1}{2\cdot \gamma_{e'}} \cdot a_{e}(e')\right\} \\
        & \geq & \sum_{\{e' \in L : d_{e'} \geq d_{e}\}}\min\left\{1,\frac {\eps}{2\cdot (1+\eps) \cdot \beta} \cdot a_{e}(e')\right\}\:,
    \end{eqnarray*}
    where the second line follows since $L \subseteq B_i$ and Proposition~\ref{prop:aff}. The third line follows from the definition of $a_e(e')$. The last line follows from Proposition~\ref{prop:gamma}.
    The theorem follows, since $\frac {\eps}{2\cdot (1+\eps) \cdot \beta} = O(1)$ and since $\bar a_{e'}(e)\eqdf \min\{1,a_{e'}(e)\}$.
\end{proof}

\end{document}